\documentclass[11pt]{article}
\usepackage[margin=1.3in]{geometry}

\usepackage{times}
\usepackage{soul}
\usepackage{url}
\usepackage[hidelinks]{hyperref}
\usepackage[utf8]{inputenc}
\usepackage[small]{caption}
\usepackage{graphicx}
\usepackage{amsmath,amsfonts,mathtools}
\usepackage{booktabs}
\usepackage{algorithm}
\usepackage{algorithmic}
\urlstyle{same}
\usepackage{paralist}
\usepackage{color}

\newtheorem{theorem}{Theorem}[section]
\newtheorem{example}{Example}[section]
\newtheorem{definition}{Definition}[section]
\newtheorem{lemma}{Lemma}[section]

\newcommand{\cP}{\mathcal{P}}
\newcommand{\cG}{\mathcal{G}}
\newcommand{\MMS}{\text{MMS}}

\newcommand{\qed}{\unskip\hspace*{1em}\hspace{\fill}$\Box$}
\newenvironment{proof}[1][Proof]{\begin{trivlist}
  \item[\hskip \labelsep {\it #1:}]}{%
    \qed\end{trivlist}}





\title{Maximin-Aware Allocations of Indivisible Goods}


\author{Hau Chan$^1$ \hspace{30pt} Jing Chen$^{2}$ \hspace{30pt} Bo Li$^{2}$ \hspace{30pt} Xiaowei Wu$^{3}$\\
$^{1}$Department of Computer Science and Engineering, University of Nebraska-Lincoln, USA\\
\texttt{hchan3@unl.edu}\\
$^{2}$Department of Computer Science, Stony Brook University, USA\\
\texttt{\{jingchen, boli2\}@cs.stonybrook.edu}\\
$^{3}$Faculty of Computer Science, University of Vienna, Austria\\
\texttt{xiaowei.wu@univie.ac.at}}


\begin{document}

\maketitle

\begin{abstract}
We study envy-free allocations of indivisible goods to agents in settings 
where each agent is unaware of the goods allocated to other agents. 
In particular, we propose the maximin aware (MMA) fairness measure, which guarantees that every agent, given the bundle allocated to her, 
is aware that she does not envy at least one other agent, even if she does not know how the other goods are distributed among other agents.
We also introduce two of its relaxations, and discuss their egalitarian guarantee and existence.
Finally, we present a polynomial-time algorithm, which computes an allocation that approximately satisfies MMA or its relaxations.
Interestingly, the returned allocation is also $\frac{1}{2}$-approximate EFX when all agents have subadditive valuations,
which improves the algorithm in [Plaut and Roughgarden, SODA 2018].
\end{abstract}

\section{Introduction}
In the last few years or so, there has been a tremendous demand for
fair division services to provide systematic and \emph{fair} ways of
dividing a set of (indivisible) goods such as tasks, courses, and properties among a group of agents 
so that the agents do not envy each other. 
Such demand gave rise to, for examples, Spliddit\footnote{http://www.spliddit.org/},
the University of Pennsylvania's Course Match\footnote{https://mba-inside.wharton.upenn.edu/course-match/},
and Fair Outcomes, Inc\footnote{https://www.fairoutcomes.com/}
which are all based on mathematical and fairness notions. 
To capture the fairness of an allocation, 
which is arguably initiated by the work of \cite{foley1967resource},
{\em envy-freeness (EF)} (and its relaxations, such as EF1 and EFX) is often used to ensure that
each agent should not envy or prefer the allocated goods of other agents. 


In this paper, we study an envy-free allocation domain 
where the planner of the division tasks
wishes to withhold allocation information of others from the user 
or the user simply does not know the allocation of others in the system. 
There are a couple of good reasons why it is desirable 
for the planner to withhold such information. 
First, in many private fair allocations of goods such as tasks or gifts, 
the planner requires the system to preserve anonymity 
as not to give away the received bundles of other agents. 
Second, due to the large number of (unrelated) agents and items 
that could be potentially be involved in the division tasks (e.g., on the Internet such as MTurks), 
it is not meaningful for the planner to provide 
such information due to various reasons.
Motivated by this domain, we focus on answering the following questions. 

\emph{When indivisible goods are to be allocated among unaware agents,
what is the appropriate envy-free notion and 
how efficiently can the allocation be found subject to the notion?}

Proportionality (PROP) and maximin share (MMS) \cite{budish2011combinatorial} 
are two widely studied and well accepted fair allocation notions,
both of which are defined for unaware agents. 
In PROP, it is required that the value of every agent's bundle is at least a $\frac{1}{n}$ 
fraction of her value for the whole goods, where $n$ is the number of agents.
It is well known that such an allocation may not exist for indivisible goods,
thus a weaker and more realistic notion is desired for indivisible goods.
MMS is a proper relaxation of PROP, which studies an adversarial situation: 
when the goods are partitioned into bundles and an agent would always get
the least preferred bundle of goods, what is the best way she can partition the goods. 
The value of such a bundle is the MMS value of the agent. 
In addition to its non-existence result, MMS allocation only guarantees each agent's best minimum value, 
and the value of some agent's bundle can still be the least compared with others, 
which may cause significant envy (demonstrated by the below example). 

\begin{example}\label{ex:intro:mms}
	Suppose there are $2$ agents and $2$ goods,  
	$v_{1}(\{1\}) = H$, $v_{1}(\{2\}) = 1$ and $v_{2}(\{1\}) = 1$, $v_{2}(\{2\}) = H$,
	where $H$ is a sufficiently large number and $v_1$ and $v_2$ are valuation functions. 
	Then each agent's MMS value is 1 and $A_{1}=\{2\}$, $A_{2}=\{1\}$ is an MMS allocation, 
	but each agent envies each other. 
	However, if we exchange their allocations, everyone's value can be improved to $H$. 
\end{example}

Recently, \cite{aziz2018knowledge} introduced epistemic envy-free (EEF) notion to study unaware setting.
With respect to EEF, each agent is satisfied if 
there exists one reallocation of the goods that she does not get among the other agents,
such that her value for her bundle is at least as good as every bundle in this reallocation. 
This measure is not robust if the reallocations are restricted by agent's reasoning (e.g., adversarial settings in MMS).
Moreover, it can be shown that EEF and PROP allocations (and their relaxations) barely exist and cannot be properly approximated\footnote{Consider three agents with two goods such that every agent has
value 1 for each good. Then the agent that receives no good has no
bounded guarantee for both PROP and EEF (and their relaxations by removing any item from the items she has not obtained).}.

\paragraph{Our Contributions.}
In this paper, we focus on modeling the envy-freeness for indivisible goods allocation
as well as deriving new algorithms to find (approximately) fair allocations
subject to different fairness notions in an unaware environment. 
Our main contributions include the following.

First, we introduce a novel fairness notion of maximin aware (MMA), which guarantees that the agent's bundle value is at least as much as her value for some other agent's bundle, no matter how the remaining goods are distributed, i.e., there is always somebody who gets no more than her.
MMA combines the notion of epistemic envy-freeness \cite{aziz2018knowledge} and MMS \cite{budish2011combinatorial} where each agent may not know or care about the exact allocation of the remaining goods to other agents, but can still guarantee that the value for her own bundle is at least as much as she can obtain if she is given the chance to repartition the remaining goods and receive the least valued portion. We provide a detailed picture of the relationship between MMA and classic fairness notions for different valuations.
	
Then we show MMA is a strong requirement and cannot be guaranteed in general, and provide two relaxations of MMA: MMA1 and MMAX. We also prove that MMA1 (and MMAX) potentially has stronger egalitarian guarantee than EF1 and such an allocation is guaranteed to exist for the following situations: (1) there are at most three agents with additive valuations and (2) there are any number of agents but all of them have identical submodular valuation.
If the above requirement of submodularity is replaced by strictly increasing subadditivity, an MMAX allocation is guaranteed to exist.
In contrast, MMS allocation may not exist even for three agents with additive valuations~\cite{kurokawa2018fair} 
and an EFX allocation is only known to exist when there are two agents \cite{plaut2018almost}.
	
We present a polynomial-time algorithm that computes an allocation such that every agent is either $\frac{1}{2}$-approximate MMA or exactly MMAX for additive valuations. For several specific classes of valuations, such as binary additive or additive with identical preference ranking, our algorithm returns an exact MMAX and EFX allocation. It is shown in \cite{plaut2018almost} that a $\frac{1}{2}$-approximate EFX allocation exists for general subadditive valuations, but finding it may need exponential time, which leaves an open question whether such an allocation can be found in polynomial time. We show that the allocation by our algorithm computes is also $\frac{1}{2}$-approximate EFX when all agents have subadditive valuations, and thus answer this problem affirmatively.

\paragraph{Our Approaches.}
To show the existence of MMA1 allocations, we first show that if all $n$ agents have identical submodular valuations, 
a leximin $n$-partition of the goods is MMA1.
Next, we provide a divide-and-choose style algorithm for 3 agents with additive valuations (need not to be identical).
Basically, we let the first agent make a leximin $3$-partition, and other 2 agents select.
Then no matter which subset is left for the first agent, she is satisfied with respect to MMA1.
However, the difficulty comes from how the other 2 agents should select.
By carefully examining all possible cases, we show that there is a way to select and redivide the three subsets such that both of them are satisfied with respect to MMA1.

To design an efficient algorithm for MMAX or EFX allocations, 
we construct a bipartite graph between agents and goods.
We show that there is a way to use a maximum weighted matching between the unenvied agents and the remaining goods to guide our allocation.
Eventually, we show that all the goods are allocated, and the final allocation is either $\frac{1}{2}$-approximate MMA or exact MMAX for additive valuations and $\frac{1}{2}$-approximate EFX for subadditive valuations.

\paragraph{Other Related Works.}
For indivisible goods allocations, EF or PROP allocations may not always exist\footnote{For example, there are two agents but only one good.}.
\cite{budish2011combinatorial} introduced the relaxed concept of envy-freeness up to one good (EF1).
In an EF1 allocation, it is only required that each agent's value for a bundle is at least as much as her value for every other agent's bundle minus a single good (in the bundle).
It is shown in \cite{lipton2004approximately} that an EF1 allocation always exists, and can be found in polynomial time.
In \cite{caragiannis2016unreasonable}, the authors introduced a strictly stronger fairness notation than EF1 called envy-free up to any good (EFX), where the comparison is made to ``any'' single good instead of ``a'' single good.
The state-of-the-art results show that an EFX allocation exists in the following settings:
(1) there are 2 agents, or
(2) there are any number of agents but all of them have the identical valuation \cite{plaut2018almost}.
It is still an open question whether an EFX allocation exists in general, even for additive valuations.

Recently, we see many new fairness notions adapted for different settings,
such as Nash social welfare \cite{caragiannis2016unreasonable},
epistemic envy-freeness \cite{aziz2018knowledge},
pairwise maximin share \cite{caragiannis2016unreasonable},
and groupwise maximin share \cite{barman2018groupwise}.
There is also a line of works making connections (or constraints) of fair allocations
to graphs or networks~\cite{abebe2017fair,bei2017networked,aziz2018knowledge}.
More works on the fair allocation problem under different constraints can be found in~\cite{ferraioli2014regular,bouveret2017fair,biswas2018fair}.

\paragraph{Organization.}
We provide necessary definitions and introduce MMA in Section \ref{sec:def}.  
We show its connections with classic fairness notions for different classes of valuations in Sections \ref{sec:MMA}.
In Sections \ref{sec:MMA1} and \ref{sec:MMAx}, we study the two relaxations of MMA, MMA1 and MMAX,
and discuss their connections with other fairness notions and their existences in different settings.
Finally, in Section \ref{sec:alg}, we present a polynomial time algorithm to compute a $\frac{1}{2}$-approximate MMAX allocation for additive valuations;
and $\frac{1}{2}$-approximate EFX allocation for subadditive valuations.

\section{Preliminaries and Definitions}\label{sec:def}

In the envy-free division problem on indivisible goods,
there is a set $N = \{1, ..., n\}$ of $n$ agents
and a set $M = \{1, ..., m\}$ of $m$ indivisible goods.
Each agent $i \in N$ has an valuation function $v_i: 2^{N} \to \mathbb{R}_{\ge 0}$
that maps every subset of goods to a non-negative real number.
We assume that $v_i$ is normalized (i.e., $v_i(\emptyset)=0$)
and montone (i.e., $v_i(S) \le v_i(T)$ for every $S \subseteq T \subseteq M$)
for each agent $i \in N$.
For convenience we use $v(j)$ to denote $v(\{j\})$, for every valuation $v$ and every good $j\in M$.
Throughout the whole paper, we call an valuation $v$
\begin{enumerate}
	\item \emph{additive} if $v(S) = \sum_{j\in S}v(j)$ for each $S \subseteq M$;
	\item \emph{binary additive} (BA) if $v$ is additive, and $v(j)\in \{0,1\}$ for any good $j$;
	\item \emph{subadditive} (SA) if $v(S \cup T) \le v(S) + v(T)$ for any $S, T \subseteq M$;
	\item \emph{submodular} (SM) if for any $S\subseteq T$ and $e\in M\setminus T$,
	$v(S\cup\{e\}) - v(S) \geq v(T\cup\{e\}) - v(T)$.
\end{enumerate}

An \emph{allocation} $A=(A_1, ..., A_n) = (A_i)_{i \in N}$ is a partition of $M$ among agents in $N$, i.e., $\bigcup_{i\in N} A_i = M$ and $A_i \cap A_j = \emptyset$ for any $i \neq j$.
In other words, $A_i$ is the set of goods allocated to agent $i$.
Let $A_{-i} = \cup_{j\neq i} A_j$ be the goods not assigned to agent $i$.
Without loss of generality, we assume that for every $j \in M$,
there exists $i \in N$ such that $v_i(j) > 0$.

Below we state some standard fairness definitions.

\begin{definition}[EF]
	For any $\alpha\in [0,1]$,
	an allocation $A$ is $\alpha$-envy-free ($\alpha$-EF) if for any $i, j \in N$,
	$v_{i}(A_{i})\geq \alpha\cdot v_{i}(A_{j})$.
	The allocation is EF when $\alpha=1$.
\end{definition}

\begin{definition}[EF1]
	For any $\alpha\in [0,1]$,
	an allocation $A$ is $\alpha$-EF1 if for any $i, j \in N$,
	there exists $e\in A_{j}$ such that
	$v_{i}(A_{i})\geq \alpha\cdot v_{i}(A_{j}\backslash \{e\})$.
	The allocation is EF1 when $\alpha=1$. 
\end{definition}

\begin{definition}[EFX]
	For any $\alpha\in [0,1]$,
	an allocation $A$ is $\alpha$-EFX if for any $i, j \in N$,
	$v_{i}(A_{i})\geq \alpha\cdot v_{i}(A_{j}\backslash \{e\})$ for any $e\in A_{j}$.
	The allocation is EFX when $\alpha=1$.
\end{definition}

Next, we introduce another fairness notion called \emph{maximin share} (MMS).
Let $\Pi_{n}(S)$ be the set of all $n$-partitions of a set $S$.
The maximin share of agent $i$ on $S$ among $n$ players is
\begin{equation*}
\MMS_i(S,n) = \max_{\pi\in \Pi_{n}(S)} \min_{S\in \pi} v_{i}(S).
\end{equation*}

\begin{definition}[MMS]
	For any $\alpha \in [0,1]$,
	an allocation $A$ is $\alpha$-MMS if for any $i \in N$,
	$v_{i}(A_{i})\geq \alpha\cdot \MMS_i(M,n)$.
	The allocation is MMS when $\alpha=1$.
\end{definition}

Next we define the epistemic envy-free (EEF) 
\cite{aziz2018knowledge}, which is the most relevant fairness notion to this work. 

\begin{definition}[EEF] \label{def_eef}
	An allocation $A$
	is EEF if for any $i$, there exists an allocation 
	$\bar{A}^i=(\bar{A}^i_{j})_{j\in N}$ such that 
	$\bar{A}^i_i = A_i$ and
	$v_{i}(A_{i})\geq v_{i}(\bar{A}^i_{j})$ for any $j\in N$.
\end{definition}

For the purpose of presenting our results, we provide the definition of proportionality fairness.

\begin{definition}[PROP]
	An allocation $A$ is PROP if for any $i \in N$,
	$v_{i}(A_{i})\geq \frac{1}{n}\cdot v_i(M)$.
\end{definition}

Finally, we say that $X\xRightarrow{type}Y$ if every $X$ allocation is also a $Y$ allocation when agents have $type$ valuations, where $X$ and $Y$ are the fairness notions (e.g., MMA1 and MMS) and $type$ is the function type (e.g. SA and SM).

\section{Maximin-Aware Allocation} \label{sec:MMA}

In this section, we introduce the fairness notion of maximin-aware (MMA), 
and study its connection to other existing fairness notations.

\begin{definition}[MMA]
	For any $\alpha \in [0,1]$,
	an allocation $A$ is $\alpha$-MMA if for any agent $i \in N$, $v_{i}(A_{i})\geq \alpha \cdot\MMS_{i}(A_{-i}, n-1)$.
	The allocation is MMA when $\alpha=1$.
\end{definition}

There is a modeling and conceptual advantages of MMA:
an MMA allocation guarantees for every agent,
no matter how the goods that she does not get are distributed
among the other agents, there is always an agent who
gets no more than her.
Compared to EEF, where the happiness of each each agent depends on the existence of a reallocation of the other goods,
MMA provides a robust way for the agents to reason about their reallocations of the remaining goods. 

Observe that an MMS allocation need not be MMA.
Recall, Example \ref{ex:intro:mms} where $A_{1}=\{2\}$ and $A_{2}=\{1\}$ is an MMS allocation.
However, this allocation is far from being MMA since $v_{1}(A_{1})\ll v_{1}(A_{2})$ and $v_{2}(A_{2})\ll v_{2}(A_{1})$.
The only MMA allocation in this example is $A_{1}=\{1\}$ and $A_{2}=\{2\}$.

Seemingly MMA being a stronger definition than MMS,
Example \ref{MMAnotMMS} shows that this is not true.

\begin{example}\label{MMAnotMMS}
	Suppose there are four agents ($N=\{1,2,3,4\}$) and eight goods ($M=\{1,2,\cdots,8\}$).
	Agent 1 has the additive valuation shown in Table \ref{table:example:MMA:mms}.\footnote{We do not explicitly describe the valuations for the other agents since it is easy to design other agents' valuations such that $S$ is allocated to agent 1.}
	It is easy to check that $\MMS_{1}(M,4)=0.5$.
	Let $S=\{5,6,7,8\}$. Then $v_1(S)=0.4<\MMS_{1}(M,4)$.
	That is, for agent 1, $S$ does not satisfy the condition of MMS .
	However, $\MMS_1(M\setminus S,3)=0.4=v_1(S)$, which satisfies the condition of MMA.
\begin{table}[ht]
	\begin{center}
		\begin{tabular}{|c|ccc|}
			\hline
			\mbox{Goods} & $1$ & $2,3,4$ & $5,6,7,8$ \\
			\hline
			\mbox{Value} &1 & 0.4 & 0.1\\
			\hline
		\end{tabular}
	\end{center}
	\caption{MMA does not imply MMS in general.}
	\label{table:example:MMA:mms}
\end{table}
\end{example}

However, when all agents have binary additive (BA) valuations, MMA actually implies MMS.

\begin{lemma} \label{lem:MMAtoMMS}
	MMA $\xRightarrow{BA}$ MMS.
\end{lemma}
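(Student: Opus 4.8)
The plan is to exploit that for binary additive valuations every valuation simply counts the number of goods an agent values at $1$, so both maximin-share quantities appearing in the definitions of MMA and MMS collapse to a clean floor expression. First I would observe that for any subset $S\subseteq M$ and any $k\ge 1$, $\MMS_i(S,k)=\lfloor v_i(S)/k\rfloor$: since $v_i$ is additive with $v_i(j)\in\{0,1\}$, the value $v_i(S)$ is exactly the number of valuable goods in $S$, and a $k$-partition that spreads these valuable goods as evenly as possible attains minimum bundle value $\lfloor v_i(S)/k\rfloor$; no partition does better, because the $k$ bundle values are nonnegative integers summing to $v_i(S)$, so their minimum is at most the average rounded down.

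Fix an agent $i$ and abbreviate $a=v_i(A_i)$ and $b=v_i(A_{-i})$, both nonnegative integers. Because $A_i$ and $A_{-i}$ partition $M$ and $v_i$ is additive, $v_i(M)=a+b$. The MMA hypothesis instantiated at agent $i$ reads $a\ge \MMS_i(A_{-i},n-1)=\lfloor b/(n-1)\rfloor$, while the MMS conclusion we want is $a\ge \MMS_i(M,n)=\lfloor (a+b)/n\rfloor$. (The case $n=1$ is vacuous since the sole agent receives all of $M$, so assume $n\ge 2$.) Thus the entire lemma reduces to the purely arithmetic implication: if $a\ge\lfloor b/(n-1)\rfloor$, then $\lfloor (a+b)/n\rfloor\le a$.

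The core of the argument, and the only place where care is needed, is the floor bookkeeping. Since $a$ is an integer, $\lfloor (a+b)/n\rfloor\le a$ is equivalent to $(a+b)/n<a+1$, i.e.\ to $b<(n-1)a+n$. From the hypothesis together with the elementary bound $\lfloor x\rfloor>x-1$ I obtain $a>b/(n-1)-1$, hence $b<(n-1)(a+1)=(n-1)a+(n-1)\le (n-1)a+n$, which is exactly the inequality required. This closes the implication and therefore the lemma. The main obstacle is nothing deep: it is the off-by-one slack in the two floors, which point in opposite directions. Establishing the identity $\MMS_i(S,k)=\lfloor v_i(S)/k\rfloor$ cleanly and then tracking the strict-versus-weak inequalities carefully through the floor manipulation is where the proof must be precise, but no genuinely hard step arises beyond that.
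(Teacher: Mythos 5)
Your proof is correct and takes essentially the same route as the paper's: both rest on the identity $\MMS_i(S,k)=\lfloor v_i(S)/k\rfloor$ for binary additive valuations and then settle the resulting floor inequality by elementary arithmetic, the only cosmetic difference being that the paper argues by contradiction from $v_i(A_i)<\lfloor v_i(M)/n\rfloor$ while you argue the implication directly. Your explicit justification of the floor identity, which the paper uses without proof, is a harmless bonus rather than a new idea.
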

\begin{proof}
	Fix any agent $i \in N $, let $m_i = v_i(M)$ be the number of goods agent $i$ is interested in.
	Then $\MMS_i(M,n) = k_i = \lfloor \frac{m_i}{n} \rfloor$.
	Fix any MMA allocation $A$, we show that it is also MMS.
	Suppose otherwise, i.e., $v_i(A_i) < k_i$.
	Then we have
	\begin{equation*}
	\MMS_{i}(A_{-i}, n-1) = \lfloor \frac{m_i-v_i(A_i)}{n-1}\rfloor
	\geq \lfloor \frac{m_i-\frac{m_i}{n}}{n-1}\rfloor \geq k_{i},
	\end{equation*}
	which contradicts the fact that $A$ is MMA.
	Hence $A$ is also an MMS allocation.
\end{proof}

Indeed, similar to MMS, MMA is a more realistic and appropriate definition for indivisible goods allocation, in the sense that MMA is more approachable than EF and PROP for additive valuations.
In the following, we show that PROP $\xRightarrow{A}$ MMA.
Combining with Theorem 4 of~\cite{aziz2018knowledge}, it implies EF $\xRightarrow{A}$ EEF $\xRightarrow{A}$ PROP $\xRightarrow{A}$ MMA.

\begin{lemma} \label{lem:PROPtoMMA}
EF $\xRightarrow{A}$ EEF $\xRightarrow{A}$ PROP $\xRightarrow{A}$ MMA
\end{lemma}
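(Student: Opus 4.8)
The plan is to establish the chain by proving the single new link PROP $\xRightarrow{A}$ MMA, and to invoke Theorem 4 of~\cite{aziz2018knowledge} for the two remaining implications EF $\xRightarrow{A}$ EEF $\xRightarrow{A}$ PROP. For completeness I would first record why those two are routine: EF $\Rightarrow$ EEF holds for \emph{any} valuation by simply taking the witnessing reallocation $\bar{A}^i$ in Definition~\ref{def_eef} to be $A$ itself, so that $v_i(A_i)\ge v_i(A_j)=v_i(\bar{A}^i_j)$ for all $j$ follows directly from envy-freeness. And EEF $\xRightarrow{A}$ PROP uses additivity: summing the $n$ inequalities $v_i(A_i)\ge v_i(\bar{A}^i_j)$ over $j\in N$ yields $n\,v_i(A_i)\ge \sum_{j\in N} v_i(\bar{A}^i_j)=v_i(M)$, where the last equality holds because the bundles $\bar{A}^i_j$ partition $M$.

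For the main link, I would fix an agent $i$ and a PROP allocation $A$, so that $v_i(A_i)\ge \frac1n v_i(M)$, and bound the maximin share from above by an averaging argument. Let $\pi^*$ be an optimal $(n-1)$-partition of $A_{-i}$ attaining $\MMS_i(A_{-i},n-1)$. Since the least-valued bundle of any partition is no larger than the average, $\MMS_i(A_{-i},n-1)=\min_{S\in\pi^*} v_i(S)\le \frac{1}{n-1}\sum_{S\in\pi^*} v_i(S)$. Additivity is exactly what converts the right-hand side into $\frac{1}{n-1}v_i(A_{-i})$, because the bundles of $\pi^*$ partition $A_{-i}$; and again by additivity $v_i(A_{-i})=v_i(M)-v_i(A_i)$.

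Combining these, $\MMS_i(A_{-i},n-1)\le \frac{v_i(M)-v_i(A_i)}{n-1}$, and the PROP hypothesis $v_i(M)\le n\,v_i(A_i)$ then gives $\frac{v_i(M)-v_i(A_i)}{n-1}\le \frac{n\,v_i(A_i)-v_i(A_i)}{n-1}=v_i(A_i)$. Hence $v_i(A_i)\ge \MMS_i(A_{-i},n-1)$, which is precisely the MMA condition for agent $i$; since $i$ was arbitrary, $A$ is MMA.

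The whole argument reduces to a pair of averaging inequalities, so I do not expect a genuine obstacle. The one point to treat carefully is the step $\sum_{S\in\pi^*} v_i(S)=v_i(A_{-i})$: it relies on additivity rather than mere monotonicity or subadditivity, and in fact under subadditivity this sum only \emph{upper}-bounds $v_i(A_{-i})$, which is the wrong direction. Thus the additive hypothesis cannot be weakened here, consistent with the fact that MMA can fail under more general valuations.
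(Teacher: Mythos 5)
Your proposal is correct and follows essentially the same route as the paper: the paper likewise invokes Theorem 4 of \cite{aziz2018knowledge} for EF $\xRightarrow{A}$ EEF $\xRightarrow{A}$ PROP and proves PROP $\xRightarrow{A}$ MMA via the identical averaging bound $\MMS_i(A_{-i},n-1)\le \frac{1}{n-1}\,v_i(A_{-i})\le \frac{v_i(M)}{n}\le v_i(A_i)$, which is just an algebraic rearrangement of your chain. Your explicit verification of the two cited links and your remark that additivity (not mere subadditivity) is what makes the averaging step go in the right direction are sound additions, but they do not change the underlying argument.
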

\begin{proof}
	Fix any PROP allocation $A$ and any agent $i$.
	It follows that $v_i(A_i) \ge \frac{1}{n}\cdot v_i(M)$.
	Since the valuations are additive,
	\begin{equation*}
	\MMS_i(M\setminus A_i, n-1) \leq \frac{1}{n-1}\cdot (1-\frac{1}{n})\cdot v_i(M) = \frac{v_i(M)}{n}.
	\end{equation*}
	Hence, the allocation is also MMA.
\end{proof}

Note that the other direction is not true:
in Example~\ref{MMAnotMMS}, agent~1 is satisfied with $S=\{5,6,7,8\}$ with respect to MMA,
but the portion is only $\frac{0.4}{2.6} < 0.25$.
On the other hand, MMA is still a strong requirement for indivisible goods allocation,
in the sense that it cannot be satisfied even in some simple settings.

\begin{lemma} \label{lem:noMMA}
	MMA allocations may not exist even when the agents have identical BA valuation.
\end{lemma}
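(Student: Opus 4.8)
The plan is to use the rigid structure of identical binary additive valuations to collapse the whole problem to a counting argument, and then produce a small instance in which the arithmetic simply cannot be balanced. Under an identical BA valuation $v$, the value $v(S)$ equals the number of goods in $S$ that are worth $1$; let $g=v(M)$ denote the total number of such \emph{valuable} goods. For any allocation $A$, write $a_i=v(A_i)$, so that $\sum_{i\in N}a_i=g$ and $A_{-i}$ contains exactly $g-a_i$ valuable goods.

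The first step is to evaluate the right-hand side of the MMA condition in closed form. Since every good is worth $0$ or $1$ and empty bundles are permitted in an $(n-1)$-partition, the best way to split the $g-a_i$ valuable goods in $A_{-i}$ is as evenly as possible, which gives
\begin{equation*}
\MMS_i(A_{-i},n-1)=\left\lfloor \frac{g-a_i}{n-1}\right\rfloor .
\end{equation*}
Hence an allocation is MMA if and only if every agent satisfies $a_i\ge \lfloor (g-a_i)/(n-1)\rfloor$, a purely numerical constraint on the vector $(a_1,\dots,a_n)$ of nonnegative integers summing to $g$.

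The second step is to exhibit an instance in which no such vector exists. The cleanest choice is $n=2$ agents and a single valuable good, i.e.\ $g=1$: here the condition reads $a_i\ge 1-a_i$, forcing $a_1\ge 1$ and $a_2\ge 1$, which is incompatible with $a_1+a_2=1$. Concretely, whichever agent is given the good leaves the other with a bundle of value $0$, while that agent's complement admits a repartition yielding a bundle of value $1$, so MMA fails. A genuinely multi-agent witness, such as $n=3$ with $g=2$, works identically: some agent necessarily receives no valuable good, yet the two valuable goods in her complement split into two singletons, giving $\MMS_i=1>0$.

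I do not expect a real obstacle here, since the lemma is a non-existence statement and a single counterexample suffices. The only point requiring care is justifying the closed form for $\MMS_i(A_{-i},n-1)$ — in particular that an $(n-1)$-partition may leave bundles empty, so the floor formula remains valid even when $g-a_i<n-1$ — after which the contradiction with $\sum_{i\in N}a_i=g$ is immediate.
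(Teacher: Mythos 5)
Your proof is correct and follows essentially the same route as the paper: both reduce MMA under identical unit-valued goods to the floor inequality $v_i(A_i)\ge\lfloor (g-v_i(A_i))/(n-1)\rfloor$ and derive a counting contradiction when $m\equiv n-1 \pmod{n}$. The only difference is that the paper exhibits the family $m=kn-1$ (showing failure for every $n$ and arbitrarily many goods), whereas you give the minimal witnesses $n=2,m=1$ and $n=3,m=2$, which suffice for the non-existence claim.
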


\begin{proof}
	Assume there are $n$ agents and $m=kn-1$ goods, where $k>1$.
	Each agent's value on every good is $1$.
	For any agent $i$, if she gets $|A_{i}| \leq k-1$ goods then
	\begin{align*}
	\MMS_{i}(A_{-i}, n-1) = & \lfloor \frac{kn-1-|A_{i}|}{n-1}\rfloor \\
	= & \lfloor k + \frac{k-1-|A_{i}|}{n-1}\rfloor \geq k > |A_{i}|.
	\end{align*}
	Thus in any MMA allocation, we have $|A_{i}| \geq k$ for every agent $i$, which is impossible with $kn-1$ goods in total.
\end{proof}

In the following two sections, we introduce two relaxations of MMA based on EF1 and EFX, respectively.

\section{Maximin-Aware up to One Good} \label{sec:MMA1}

In this section, we relax the MMA notion to maximin-aware up to one good (MMA1) that is analogous to EF1.

\begin{definition}[MMA1]\label{def:MMA1}
	Fix any $\alpha \in[0,1]$, an allocation $A$ is
	$\alpha$-MMA1 if for any $i$, there exists $e\in A_{-i}$ such that
	\begin{equation*}
	v_{i}(A_{i})\geq \alpha\cdot \MMS_{i}(A_{-i}\backslash \{e\}, n-1).
	\end{equation*}
	The allocation is MMA1 when $\alpha=1$.
\end{definition}

\subsection{Connection to EF1 and MMS}

We first note that while MMA1 is a relaxed version of MMA, it may have better egalitarian guarantee than EF1 allocations.
An MMA1 allocation $A$ guarantees that for each agent $i$ and her favorite item $e\in A_{-i}$ (assume $e\in A_{k}$),
$v_{i}(A_{i})$ is at least as large as the worst bundle in {\em any} $(n-1)$-partition of $A_{-i}\setminus\{e\}$.
However, EF1 implies that there {\em exists} an $(n-1)$-partition of $A_{-i}\setminus\{e\}$, 
i.e., $A_{i'}$ for $i'\neq k$ and $A_{k}\setminus\{e\}$, such that $v_{i}(A_{i})$ is at least as large as the worst bundle, i.e. $A_{k}\setminus\{e\}$.
We illustrate this as the following example.

\begin{example}	\label{example:EF1}
	Consider $n+1$ agents and $2n+1$ goods, where every agent has identical additive valuation $v$ with
	$v(\{j\}) = n$ for $1\leq j\leq n$ and $v(\{j\}) = 1$ for $n+1\leq j\leq 2n+1$.
	Thus allocation $A_{i}=\{i,n+i\}$ for $1\leq i\leq n$ and $A_{n+1}=\{2n+1\}$ is EF1.
	In such an allocation, agent $n+1$ only gets value 1.
	However, there is an allocation (for example: $A_{i}=\{i\}$ for $1\leq i\leq n$ and $A_{n+1}=\{n+1,\cdots,2n+1\}$) 
	where everyone obtains value at least $n$.\footnote{We note that there is no exact EF allocation for this example.}
\end{example}

In Example \ref{example:EF1}, an EF1 allocation can be arbitrarily bad to agent $n + 1$.
However, we show that any MMA1 allocation $A$ in this example guarantees that each agent's value is at least $\lceil\frac{n}{2}\rceil$.
If everyone's value is at least $n$, then our claim trivially holds.
Suppose for some agent $i$, $v(A_{i})=k<n$.
Then among all goods in $A_{-i}$, there are $n$ goods with value $n$ and $n+1-k$ goods with value 1.
Thus by removing the most favorite good, the maximin share for the remaining set is $\min\{n, n+1-k\}$.
Since $A$ is an MMA1 allocation, $v(A_{i})=k\geq n+1-k$, which means $v(A_{i}) \geq \lceil\frac{n}{2}\rceil$.

In the following, we show that MMS implies MMA1 when the agents's valuations are submodular.

\begin{lemma} \label{thm:MMA1:mms:add}
	MMS $\xRightarrow{SM}$ MMA1.
\end{lemma}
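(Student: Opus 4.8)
The plan is to prove that for submodular valuations, any MMS allocation $A$ is automatically MMA1. Fix an arbitrary agent $i$ and an MMS allocation $A$, so that by definition $v_i(A_i) \geq \MMS_i(M, n)$. The goal is to exhibit a single good $e \in A_{-i}$ such that $v_i(A_i) \geq \MMS_i(A_{-i} \setminus \{e\}, n-1)$. The natural strategy is to relate the maximin value of agent $i$ over the full set $M$ partitioned into $n$ bundles, to her maximin value over a slightly reduced set $A_{-i} \setminus \{e\}$ partitioned into $n-1$ bundles. Intuitively, removing the bundle $A_i$ (which agent $i$ values highly, at least her MMS share) together with one extra good should bring the remaining set down to something she can dominate with $n-1$ parts.

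First I would take an optimal $(n-1)$-partition $\pi = (P_1, \dots, P_{n-1})$ witnessing $\MMS_i(A_{-i} \setminus \{e\}, n-1)$ for a good $e$ to be chosen, and compare it against the $n$-partition structure underlying $\MMS_i(M,n)$. The cleanest route is to argue the contrapositive at the level of partitions: suppose toward contradiction that for every $e \in A_{-i}$ we have $v_i(A_i) < \MMS_i(A_{-i}\setminus\{e\}, n-1)$. I would then try to combine the optimal $(n-1)$-partition of $A_{-i}\setminus\{e\}$ with the bundle $A_i \cup \{e\}$ to manufacture an $n$-partition of $M$ whose minimum bundle value strictly exceeds $\MMS_i(M,n)$, contradicting the definition of the maximin share. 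For the minimum over the new $n$ parts to beat $\MMS_i(M,n)$, I need each of the $n-1$ parts $P_t$ to have value greater than $\MMS_i(M,n)$, which follows from $v_i(P_t) \geq \MMS_i(A_{-i}\setminus\{e\},n-1) > v_i(A_i) \geq \MMS_i(M,n)$, and I also need the last part $A_i \cup \{e\}$ to exceed $\MMS_i(M,n)$, which is immediate from monotonicity since $v_i(A_i \cup \{e\}) \geq v_i(A_i) \geq \MMS_i(M,n)$.

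The main obstacle, and the place where submodularity must enter, is ensuring the $n-1$ bundles $P_t$ are genuinely valued above $\MMS_i(M,n)$ rather than merely at the threshold, so that the contradiction is strict; with additive valuations this is straightforward, but for general submodular functions I expect to need care about whether removing a carefully chosen good $e$ actually lowers the maximin value enough, or whether the inequalities only hold non-strictly. I anticipate the key lemma will be a submodularity-based bound of the form $\MMS_i(A_{-i}\setminus\{e\}, n-1) \leq \MMS_i(M, n)$ for a well-chosen $e$ — essentially showing that deleting agent $i$'s own bundle plus one good from the ground set, and dropping from $n$ to $n-1$ parts, cannot increase the worst-bundle guarantee. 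Establishing this monotonicity-style inequality for submodular $v_i$ is where the diminishing-returns property of submodularity does the real work: it controls how the value of an optimal bundle can grow when goods are reshuffled, and it is precisely the step that fails for merely subadditive valuations.

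If the direct partition-surgery argument proves too delicate to make strict, the fallback plan is to pick $e$ to be a maximum-value good in $A_{-i}$ under $v_i$ and argue that in the optimal $(n-1)$-partition of $A_{-i}\setminus\{e\}$, one of the $n-1$ parts can be merged with $\{e\}$ and $A_i$ redistributed, again yielding a valid $n$-partition of $M$; submodularity then bounds the loss incurred by the merge. Either way, the crux is the same inequality relating the two maximin quantities, and I would expect the clean version of the proof to isolate that inequality as the single nontrivial claim and reduce everything else to monotonicity of $v_i$.
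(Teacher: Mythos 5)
The skeleton of your argument---combine the optimal $(n-1)$-partition of $A_{-i}\setminus\{e\}$ with the bundle $A_i\cup\{e\}$ and contradict the maximality in the definition of $\MMS_i(M,n)$---is exactly the paper's approach, but your proposal has a genuine gap at the strictness step, and you have mislocated it. The $n-1$ parts $P_t$ are not the problem: under the contradiction hypothesis they satisfy $v_i(P_t)\geq \MMS_i(A_{-i}\setminus\{e\},n-1) > v_i(A_i) \geq \MMS_i(M,n)$, strictly. The problem is the $n$-th part, which you dismiss as ``immediate from monotonicity'': monotonicity gives only $v_i(A_i\cup\{e\})\geq v_i(A_i)\geq \MMS_i(M,n)$, a weak inequality, so the minimum of your manufactured $n$-partition is only $\geq \MMS_i(M,n)$---and an $n$-partition that merely \emph{achieves} the maximin value contradicts nothing. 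In the boundary case $v_i(A_i)=\MMS_i(M,n)$ with $e$ of zero marginal value with respect to $A_i$, your argument produces no contradiction at all.

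The missing idea, which is the actual content of the paper's proof, is to choose $e$ with strictly positive marginal value. If some $e\in A_{-i}$ satisfies $v_i(A_i\cup\{e\})>v_i(A_i)$, then all $n$ parts strictly exceed $\MMS_i(M,n)$ and the contradiction is clean (this case uses nothing about submodularity). If no such $e$ exists, submodularity enters---its only appearance in the entire proof---via the telescoping bound $v_i(M)\leq v_i(A_i)+\sum_{e\in A_{-i}}\bigl(v_i(A_i\cup\{e\})-v_i(A_i)\bigr)=v_i(A_i)$, so $v_i(A_i)=v_i(M)$ and agent $i$ trivially satisfies MMA1 by monotonicity. Neither of your proposed remedies supplies this dichotomy: your anticipated ``key lemma'' $\MMS_i(A_{-i}\setminus\{e\},n-1)\leq \MMS_i(M,n)$ is stronger than needed (the lemma only requires an upper bound by $v_i(A_i)$, which may strictly exceed $\MMS_i(M,n)$), and proving it runs into the very same weak inequality; and your fallback of picking $e$ of maximum value under $v_i$ does not help, since for a submodular valuation a maximum-value good can still have zero marginal value relative to $A_i$---what matters is the marginal contribution to $A_i$, not the absolute value.
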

\begin{proof}
	Let $A$ be an MMS allocation.
	It follows that $v_i(A_i) \geq \MMS_i(M, n)$ for every agent $i$.
	Fix any agent $i \in N$, it suffices to show that there exists $e\in A_{-i}$ such that $v_i(A_i) \geq \MMS(A_{-i}\setminus\{e\}, n-1)$.
	
	If for every $e\in A_{-1}$, $v_{i}(A_i\cup\{e\})=v_{i}(A_i)$, then by the submodularity of $v$, we have
	\begin{equation*}
	v_i(M) \leq v_i(A_i) + \sum_{e\in A_{-i}}(v_i(A_i\cup \{e\}) - v_i(A_i)) = v_i(A_i),
	\end{equation*}
	where the inequality holds since the marginal contribution of each good does not increase.
	Thus $A$ is also an MMA1 allocation for agent $i$.
	
	Otherwise, let $e\in A_{-i}$ be such that $v_{i}(A_i\cup\{e\}) > v_{i}(A_i)$.
	For the sack of contradiction, suppose there exists a partition $B=(B_j)_{j\in N\setminus \{i\}}$ of $A_{-i} \setminus \{e\}$ such that $v_i(B_j) > v_i(A_i)$ for all $j \in N\setminus \{i\}$.
	Then if we include $e$ into $A_i$, the resulting partition of $M$ (into $n$ parts) has minimum value strictly larger than $\MMS_i(M,n)$ on every agent, which contradicts the definition of $\MMS_i(M,n)$.
\end{proof}

By a similar argument, every $\alpha$-MMS allocation is naturally $\alpha$-MMA1.
Recall Example \ref{MMAnotMMS} where we show an MMA allocation is not necessarily MMS (thus neither an MMA1 allocation), this implication is strict.

%


Next, we show that for binary additive valuations, MMA1 and MMS are actually equivalent.

\begin{lemma} \label{thm:MMA1:mms}
	MMA1 $\xLeftrightarrow{BA}$ MMS.
\end{lemma}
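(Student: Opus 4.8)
The goal is to prove the equivalence $\text{MMA1} \xLeftrightarrow{BA} \text{MMS}$ for binary additive valuations, so the plan is to establish both implications separately.

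For the direction MMS $\Rightarrow$ MMA1, I would note that binary additive valuations are a special case of submodular valuations, so this direction follows immediately from Lemma \ref{thm:MMA1:mms:add} (MMS $\xRightarrow{SM}$ MMA1). I would state this observation and move on, since no new work is required.

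The substantive direction is MMA1 $\Rightarrow$ MMS, and here I would mimic the counting argument used in the proof of Lemma \ref{lem:MMAtoMMS}. Fix any agent $i$ and let $m_i = v_i(M)$ be the number of goods agent $i$ values at $1$, so that $\MMS_i(M,n) = k_i = \lfloor m_i/n \rfloor$. Given an MMA1 allocation $A$, suppose for contradiction that $v_i(A_i) < k_i$. The MMA1 condition guarantees some $e \in A_{-i}$ with $v_i(A_i) \geq \MMS_i(A_{-i} \setminus \{e\}, n-1)$. Since valuations are binary additive, removing a single good decreases the count of valued goods by at most one, so I would compute
\begin{equation*}
\MMS_i(A_{-i} \setminus \{e\}, n-1) = \left\lfloor \frac{m_i - v_i(A_i) - v_i(e)}{n-1} \right\rfloor \geq \left\lfloor \frac{m_i - v_i(A_i) - 1}{n-1} \right\rfloor.
\end{equation*}
The plan is then to show that under the assumption $v_i(A_i) \leq k_i - 1$, this floor is still at least $k_i$, contradicting the MMA1 inequality $v_i(A_i) \geq \MMS_i(A_{-i}\setminus\{e\}, n-1)$. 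Concretely, I would verify that $m_i - (k_i - 1) - 1 = m_i - k_i \geq (n-1)k_i$ using $m_i \geq n k_i$, which forces the floor up to at least $k_i$.

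The main obstacle I anticipate is the careful handling of the floor functions and the off-by-one from removing the single good $e$: I must check that the extra $-1$ in the numerator (from $v_i(e)$) does not break the inequality $\lfloor \cdot \rfloor \geq k_i$. This is why it is essential to use the assumption $v_i(A_i) \leq k_i - 1$ (rather than just $v_i(A_i) < k_i$) to absorb the loss from removing $e$; the slack of one unit in $v_i(A_i)$ exactly compensates for the one good removed. Once the arithmetic confirms $\MMS_i(A_{-i}\setminus\{e\},n-1) \geq k_i > v_i(A_i)$, the contradiction is immediate, establishing $v_i(A_i) \geq k_i = \MMS_i(M,n)$ for every agent, so $A$ is MMS.
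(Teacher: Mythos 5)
Your proof is correct and takes essentially the same route as the paper's: the MMS $\Rightarrow$ MMA1 direction is dispatched by observing that binary additive valuations are submodular and invoking Lemma~\ref{thm:MMA1:mms:add}, while the converse is the same floor-counting contradiction, using $v_i(A_i) \leq k_i - 1$ and $m_i \geq n k_i$ to force $\MMS_i(A_{-i}\setminus\{e\}, n-1) \geq k_i > v_i(A_i)$. Your explicit bound $v_i(e) \leq 1$ is in fact slightly more careful than the paper, which writes the equality as if the removed good always has value $1$; the argument is otherwise identical.
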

\begin{proof}
	Since binary additive functions are submodular, it suffices to prove the "$\Rightarrow$" direction.
	Fix any MMA1 allocation $A = (A_i)_{i\in N}$, we prove that $v_i(A_i) \geq k_i$ for every agent $i$, where $k_i = \MMS_i(M,n) = \lfloor \frac{v_i(M)}{n} \rfloor$.
	
	Fix any agent $i$ and assume the contrary that $v_i(A_i) < k_i$.
	Let $m_i = v_i(M)$.
	Then for any $e\in A_{-i}$ we have 
	\begin{align*}
	& \MMS_{i}(A_{-i}\setminus \{e\}, n-1) = \lfloor \frac{m_i-1-v_i(A_i)}{n-1}\rfloor \\
	\geq & \lfloor \frac{m_i-1-k_i+1}{n-1}\rfloor \geq k_{i} > v_i(A_i),
	\end{align*}
	which contradicts the fact that $A$ is MMA1.
\end{proof}

Recall that in Lemma~\ref{lem:MMAtoMMS}, we prove that MMA $\xRightarrow{BA}$ MMS, which is actually implied by Lemma~\ref{thm:MMA1:mms}, since MMA $\Rightarrow$ MMA1, for any valuations.

It is shown in \cite{kurokawa2018fair} that even for three agents with additive valuations,
MMS allocations do not always exist.
As we will show later, for a broader class of situations, an MMA1 allocation is guaranteed to exist.

\subsection{The Existence of MMA1 Allocations}\label{sec:MMA1,MMAX:existence}

By Lemma \ref{thm:MMA1:mms}, we know that an MMA1 allocation always exists when the valuations are binary additive.
In this section, we explore the existence of MMA1 allocations for instances with subadditive valuations.

\paragraph{Leximin $n$-partition.}
First, for any valuation $v$ over a set $M$ of goods, we define the {\em leximin $n$-partition} as follows.
A leximin $n$-partition is a partition which divides $M$ into $n$ subsets and
maximizes the lexicographical order when the values of the partitions are sorted in non-decreasing order.
In other words, a leximin $n$-partition maximizes the minimum value over all possible $n$-partitions, and if there is a tie it selects the one maximizing the second minimum value, and so on.

\medskip

In the following, we show that MMA1 allocations exist when all agents have identical submodular valuation, or when there are three agents with (different) submodular valuations.

\paragraph{Identical submodular valuation.}
We first show that when all agents have identical submodular valuation,
the leximin $n$-partition of $M$ is indeed an MMA1 allocation.

\begin{theorem}\label{thm:MMA1:sym:agent}
	When all agents have the identical submodular valuation, the leximin $n$-partition of $M$ is MMA1.
\end{theorem}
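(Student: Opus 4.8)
The plan is to observe that a leximin $n$-partition is, in particular, an MMS allocation for identical valuations, and then to invoke the already-established Lemma~\ref{thm:MMA1:mms:add}. Concretely, write $v$ for the common valuation and let $A=(A_1,\dots,A_n)$ be a leximin $n$-partition. By definition the leximin partition maximizes $\min_j v(A_j)$ over all $n$-partitions, and this maximized minimum is exactly $\MMS(M,n)$. Hence $v(A_j)\ge \min_{j'} v(A_{j'}) = \MMS(M,n) = \MMS_j(M,n)$ for every $j$, so $A$ satisfies the MMS condition for every agent. Since the common valuation is submodular, Lemma~\ref{thm:MMA1:mms:add} (MMS $\xRightarrow{SM}$ MMA1) immediately yields that $A$ is MMA1. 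This is the whole argument; the remaining work is only to spell out the two ingredients, namely that leximin attains $\MMS(M,n)$ and that the submodular MMS-to-MMA1 implication applies.

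For completeness I would also give the self-contained version, which is essentially the proof of Lemma~\ref{thm:MMA1:mms:add} specialized to this partition. Fix an agent $i$ and suppose, for contradiction, that MMA1 fails for $i$, i.e.\ $\MMS(A_{-i}\setminus\{e\}, n-1) > v(A_i)$ for every $e\in A_{-i}$. I split into two cases according to whether any single outside good helps agent $i$. In the first case there is some $e^\star\in A_{-i}$ with $v(A_i\cup\{e^\star\}) > v(A_i)$; taking an $(n-1)$-partition $(B_j)_{j\ne i}$ of $A_{-i}\setminus\{e^\star\}$ that attains $\MMS(A_{-i}\setminus\{e^\star\},n-1) > v(A_i)$, the $n$-partition $(A_i\cup\{e^\star\},(B_j)_{j\ne i})$ of $M$ has every bundle of value strictly above $v(A_i)\ge \MMS(M,n)$, contradicting that $\MMS(M,n)$ is the largest achievable minimum. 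In the second case $v(A_i\cup\{e\}) = v(A_i)$ for all $e\in A_{-i}$; submodularity (diminishing marginal returns) then forces every marginal of every remaining good to vanish, so $v(M) = v(A_i)$, and the failure assumption gives the impossible chain $\MMS(A_{-i}\setminus\{e\},n-1) > v(A_i) = v(M) \ge \MMS(A_{-i}\setminus\{e\},n-1)$.

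The step I expect to be the crux is the second case: it is exactly where submodularity, rather than mere subadditivity, is needed. The point is that vanishing \emph{single-good} marginals only imply $v(M)=v(A_i)$ because submodularity propagates zero marginals to all larger sets; a merely subadditive valuation could hide value in the full set and break the argument, which is consistent with the fact that for subadditive valuations one only obtains MMAX rather than MMA1. Finally I would dispose of the degenerate situations (for instance $A_{-i}=\emptyset$, or fewer than $n$ goods so that $\MMS(M,n)=0$) separately, since there the MMA1 requirement is trivially met; none of these affect the main argument above.
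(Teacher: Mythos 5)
Your proposal is correct, and your primary route genuinely differs from the paper's. The paper proves the theorem directly and self-containedly: it fixes agent $i$, splits on whether some good in $A_{-i}$ has positive marginal value over $A_i$, and in the positive-marginal case moves that good into $A_i$ to produce an $n$-partition in which every bundle strictly exceeds $v(A_i)$, contradicting the leximin optimality of $A$ itself; in the zero-marginal case it uses submodularity to conclude $v(A_i)=v(M)$. Your first paragraph instead factors the theorem through Lemma~\ref{thm:MMA1:mms:add}: you note that the leximin $n$-partition maximizes the minimum bundle value, so its minimum equals $\MMS(M,n)$ and every bundle satisfies the MMS condition under the common valuation, whence MMS $\xRightarrow{SM}$ MMA1 finishes the proof. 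This decomposition is cleaner than the paper's: it makes the theorem an immediate corollary of an already-proved lemma, and it isolates the only leximin-specific fact actually used, namely that leximin attains the max-min (the full lexicographic refinement plays no role in either argument, which your reduction makes explicit while the paper's appeal to ``strictly better than the leximin $n$-partition'' slightly obscures it). Your self-contained second paragraph is then essentially the paper's proof restated, with the contradiction routed through $\MMS(M,n)$ rather than through leximin optimality directly; the two are interchangeable here since $v(A_i)$ is at least the leximin minimum. You also correctly identify the crux: submodularity, not mere subadditivity, is what propagates vanishing single-good marginals over $A_i$ to the conclusion $v(M)=v(A_i)$. Your separate treatment of degenerate cases such as $A_{-i}=\emptyset$, where the existential quantifier in Definition~\ref{def:MMA1} is vacuous, addresses an edge case the paper silently skips, so it is a small improvement rather than a deviation.
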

\begin{proof}
	Let $v$ be the submodular valuation and $A$ be the leximin $n$-partition of $M$.
	We show that allocating each partition $A_i$ to agent $i$ gives an MMA1 allocation.
	As before, if for all $e\in A_{-i}$, $v(A_i\cup \{e\}) = v(A_i)$, then we have
	\begin{equation*}
	v(A_i) = v(M)\geq \MMS(A_{-i},n-1)
	\end{equation*}
	by submodularity of $v$. Otherwise let $e\in A_{-i}$ be any good such that $v(A_i\cup\{e\}) > v(A_i)$.
	
	Assume the contrary that $v(A_i) < \MMS(A_{-i}\setminus\{e\},n-1)$.
	
	Let $B = (B_j)_{j\in N\setminus\{i\}}$ be the partition of $A_{-i}\setminus\{e\}$ such that $v(A_i) < v(B_j)$ for all $j\neq i$.
	Then by including $e$ into $A_i$, we obtain a partition of $M$ such that every bundle has value strictly larger than $v(A_i)$.
	In other words, the resulting partition is strictly better than the leximin $n$-partition, which is a contradiction.
\end{proof}

\paragraph{Three agents with different additive valuations.}
Next we show a divide-and-choose style algorithm (shown in Algorithm~\ref{alg:3agents}) which gives an MMA1 allocation.

\begin{algorithm}[H]
	\caption{\hspace{-3pt}  Divide-and-Choose Algorithm for Three Agents}
	\label{alg:3agents}
	\begin{enumerate}
		\item Agent 3 divides $M$ as a leximin partition, which is denoted by $A = (A_1, A_2, A_3)$.
		
		\item \label{step:n3:1}If agent 1's and agent 2's favorite bundles are different, each of them takes her favorite bundle and the remaining bundle is allocated to agent 3.
		
		\item \label{step:n3:2}Otherwise (agent 1 and agent 2 have the same favorite bundle), we compare their second favorite bundle.
		If their second favorite bundles are also the same, agent 2 takes her favorite two bundles (w.o.l.g, say $A_1$ and $A_2$), and leaves $A_3$ to agent 3.
		Next, agent 2 repartitions $A_1\cup A_2$ by her leximin $2$-partitions, denoted by $B_{1}$ and $B_{2}$.
		Agent 1 chooses her favorite one in $B_{1}$ and $B_{2}$, and the other one is allocated to agent 2.
		
		\item Finally, suppose their second favorite bundles are different.
		W.o.l.g, suppose $v_{1}(A_{1})\geq v_{1}(A_{2})\geq v_{1}(A_{3})$ and $v_{2}(A_{1}) \geq v_{2}(A_{3})\geq v_{2}(A_{2})$.
		\begin{enumerate}
			\item  \label{step:n3:3}If $2\cdot v_{1}(A_{2}) > v_{1}(A_{1})+v_{1}(A_{3})$ and $2\cdot v_{2}(A_{3}) > v_{2}(A_{1})+v_{2}(A_{2})$, allocate $A_{2}$ to agent 1, $A_{3}$ to agent 2 and $A_1$ to agent 3.
			\item \label{step:n3:4}Otherwise (assume w.l.o.g. that $2\cdot v_{1}(A_{2}) \leq v_{1}(A_{1})+v_{1}(A_{3})$).
			Agent 2 takes $A_{1}$ and $A_{3}$, which are her favorite two bundles, and leave $A_2$ to agent 3.
			Next, agent 2 repartitions $A_{1}\cup A_{3}$ by her leximin $2$-partition $C = (C_1, C_2)$, and let Agent 1 choose her favorite one in $C_{1}$ and $C_{2}$. The other one is allocated to agent 2.
		\end{enumerate}
	\end{enumerate}
\end{algorithm}

\begin{theorem}\label{thm:MMA1:3agents}
	When $n=3$ and the valuations are additive, Algorithm~\ref{alg:3agents} computes an MMA1 allocation.
\end{theorem}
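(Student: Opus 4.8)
The plan is to verify the MMA1 condition separately for each of the three agents, branch by branch of Algorithm~\ref{alg:3agents}, leaning on two recurring tools. The first is the elementary bound that for additive $v_i$ and any set $S$, $\MMS_i(S,2)\le \frac{1}{2}v_i(S)$, since the smaller of two additive parts is at most half the total. A convenient consequence I will use repeatedly is a \emph{half-bound criterion}: if agent $i$'s bundle satisfies $v_i(A_i)\ge v_i(X)$ and $v_i(A_i)\ge v_i(Y)$ where $A_{-i}=X\cup Y$ is the union of the other two bundles, then $\MMS_i(A_{-i},2)\le\frac{1}{2}(v_i(X)+v_i(Y))\le v_i(A_i)$, so agent $i$ is in fact MMA (not merely MMA1). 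The second tool is Theorem~\ref{thm:MMA1:sym:agent}: since agent~3 receives one full bundle of \emph{her own} leximin $3$-partition in every branch (the leftover bundle in Step~\ref{step:n3:1}, $A_3$ in Step~\ref{step:n3:2}, $A_1$ in Step~\ref{step:n3:3}, $A_2$ in Step~\ref{step:n3:4}), and $A_{-3}=M\setminus A_3^{\text{final}}$ is unaffected by how agents~1 and~2 later repartition, applying Theorem~\ref{thm:MMA1:sym:agent} with all three virtual agents having valuation $v_3$ immediately gives that this bundle satisfies the MMA1 condition for agent~3. Thus agent~3 is handled uniformly and I only need to treat agents~1 and~2.

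The ``easy'' branches for agents~1 and~2 reduce to the half-bound criterion. In Step~\ref{step:n3:1} each of agents~1 and~2 receives her favorite bundle among $A_1,A_2,A_3$, so her value dominates that of each of the other two bundles and the criterion yields MMA. In Step~\ref{step:n3:3} the two defining inequalities $2v_1(A_2)>v_1(A_1)+v_1(A_3)$ and $2v_2(A_3)>v_2(A_1)+v_2(A_2)$ say exactly that the assigned bundle exceeds half the value of the remaining goods, i.e.\ $v_1(A_2)>\frac{1}{2}v_1(A_{-1})\ge\MMS_1(A_{-1},2)$ and symmetrically for agent~2, again giving MMA outright.

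The substantive branches are the repartition Steps~\ref{step:n3:2} and~\ref{step:n3:4}, which share a common structure: agent~2 takes her two favorite bundles, forms their union $T$ (equal to $A_1\cup A_2$, resp.\ $A_1\cup A_3$), leaves her least-favorite bundle $L$ (equal to $A_3$, resp.\ $A_2$) to agent~3, and leximin-$2$-partitions $T$ into $(B_1,B_2)$ with $v_2(B_1)\le v_2(B_2)$; agent~1 then chooses her $v_1$-favorite half. Because $(A_1,A_2)$ (resp.\ $(A_1,A_3)$) is itself a feasible $2$-partition of $T$, the leximin optimality gives $v_2(B_1)\ge v_2(L)$. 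For agent~1, choosing her favorite half yields $v_1(\text{her half})\ge\frac12 v_1(T)$, while $v_1(L)\le \frac12 v_1(T)$ holds because $L$ is agent~1's least-favorite bundle in Step~\ref{step:n3:2}, and because of the defining inequality $2v_1(A_2)\le v_1(A_1)+v_1(A_3)=v_1(T)$ in Step~\ref{step:n3:4}; the half-bound criterion then gives MMA for agent~1. For agent~2, if agent~1's choice leaves her the \emph{better} half $B_2$, the criterion applies directly since $v_2(B_2)\ge v_2(B_1)\ge v_2(L)$. The only genuinely delicate case — and the main obstacle — is when agent~1 takes $B_2$, leaving agent~2 with her worse half $B_1$; here I expect to need the ``up to one good'' slack. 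I would invoke the leximin-gap property that for the balanced partition $(B_1,B_2)$ every good $g\in B_2$ satisfies $v_2(g)\ge v_2(B_2)-v_2(B_1)$ (otherwise moving $g$ to $B_1$ improves the minimum). Choosing such an $e\in B_2$ as the removed good and using $v_2(L)\le v_2(B_1)$, one checks $v_2(e)\ge v_2(B_2)-v_2(B_1)\ge v_2(B_2)+v_2(L)-2v_2(B_1)$, which rearranges to $\frac{1}{2}\big(v_2(B_2)+v_2(L)-v_2(e)\big)\le v_2(B_1)$, hence $\MMS_2\big((B_2\cup L)\setminus\{e\},2\big)\le v_2(B_1)$, establishing MMA1 for agent~2.

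Finally, I would record two routine points. The ``w.l.o.g.'' in Step~\ref{step:n3:4} is justified by symmetry: the negation of Step~\ref{step:n3:3}'s condition forces $2v_1(A_2)\le v_1(A_1)+v_1(A_3)$ or $2v_2(A_3)\le v_2(A_1)+v_2(A_2)$, and the second case is mapped to the first by swapping the roles of agents~1 and~2 together with the bundle labels $A_2\leftrightarrow A_3$ so that the chooser's ordering is again $v_1(A_1)\ge v_1(A_2)\ge v_1(A_3)$. Degenerate situations (e.g.\ an empty or zero-value better half $B_2$, or all goods in $A_{-i}$ worthless to agent~$i$) make $\MMS_i(\cdot,2)=0$ and are immediate. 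Assembling the per-agent, per-branch verifications completes the proof that the returned allocation is MMA1.
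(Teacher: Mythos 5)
Your proposal is correct and follows essentially the same route as the paper's proof: agent~3 is handled via Theorem~\ref{thm:MMA1:sym:agent}, the easy branches via the bound $\MMS_i(S,2)\le\frac{1}{2}v_i(S)$ (the paper phrases this as ``value at least $\frac{1}{3}v_i(M)$''), and in the one delicate case (agent~2 left with $B_1$) your gap condition $v_2(e)\ge v_2(B_2)-v_2(B_1)$ is exactly the paper's leximin-exchange inequality $v_2(B_1)\ge v_2(B_2\setminus\{e\})$, used to the same end. One small nit that does not affect correctness: your ``every $g\in B_2$'' form of the gap property can fail for zero-valued goods when $v_2(B_2)>v_2(B_1)$ --- the exchange argument only yields \emph{existence} of such an $e$ (pick a positive-value good to move) --- but existence is all you use, and you in fact make explicit the w.l.o.g.\ symmetry in Step~\ref{step:n3:4} and the degenerate zero-value cases that the paper's proof glosses over.
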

\begin{proof}
	First, observe that since $(A_1,A_2,A_3)$ is a leximin $n$-partition of agent 3, and she always gets one of $A_1$, $A_2$ and $A_3$, by Theorem \ref{thm:MMA1:sym:agent}, she is satisfied with respect to MMA1.
	
	Next, we consider agents 1 and 2.
	
	If the final allocation is from Case~\eqref{step:n3:1}, i.e., both of them get their favorite bundles, then they are satisfied with respect to MMA1, given that the valuations are additive.
	
	If the final allocation is from Case~\eqref{step:n3:2}, i.e., agent 2 repartitions $A_{1}$ and $A_{2}$, the favorite two bundles of both agents, and let agent 1 choose first, then we show that both agents are satisfied with respect to MMA1.
	First, Agent 1 will always be happy, since the bundle she chooses must have value at least $\frac{1}{3}\cdot v_{1}(M)$, by the additivity of $v_1$.
	For agent 2, by leximin partitioning $A_1\cup A_2$ into $B_{1}$ and $B_{2}$ (suppose $v_2(B_1)\leq v_2(B_2)$), we have $v_2(B_2)\geq \frac{1}{3}\cdot v_2(M)$. Thus if $B_2$ is left for agent 2, she will be satisfied.
	Otherwise agent 2 gets $B_1$, which satisfies $v_2(B_1) \geq v_2(A_3)$, since agent 2 is still able to leave $A_1$ and $A_2$ unchanged.
	Observe that there must exist some $e\in B_2$, such that $v_2(B_1\cup\{e\}) > v_2(B_1)$, as otherwise $v_2(B_1) = v_2(B_1\cup B_2)$, which is impossible.
	Moreover, we have $v_{2}(B_{1})\geq v_{2}(B_{2}\backslash \{e\})$, otherwise $B_{1}$ and $B_{2}$ is not a leximin partition.
	Thus $v_{2}(B_{1})\geq \frac{1}{2}\cdot v_{2}(A_3\cup B_2\backslash\{e\})$, which implies that agent 2 is also satisfied with respect to MMA1.
	
	If the final allocation is from Case~\eqref{step:n3:3}, agent 1 and agent 2 get a bundle with value at least $\frac{1}{3} v_1(M)$ and $\frac{1}{3} v_2(M)$, respectively.
	Thus they are both satisfied.
	
	If the final allocation is from Case~\eqref{step:n3:4}, i.e., agent 2 takes away her favorite two bundles $A_{1}$ and $A_{3}$ and leximin-partitions them, then similar to Case \ref{step:n3:2}, no matter which one is left for her, she will be satisfied.
	Note that for agent 1, since $v_{1}(A_{1})+v_{1}(A_{3}) \geq 2\cdot v_{1}(A_{2})$, the bundle she gets has value at least $\frac{1}{2}\cdot v_1(A_1\cup A_3)\geq \frac{1}{3}\cdot v_1(M)$.
	Thus agent 1 is also satisfied.
\end{proof}

\paragraph{Remark.}
We note that when there are two agents and both of them have subadditive valuations (do not need to be identical), the definition of MMA1 degenerates to be EF1.
Thus an MMA1 allocation always exists.
Indeed, using approaches similar to Theorems~\ref{thm:MMA1:sym:agent} and~\ref{thm:MMA1:3agents},
we can show that a leximin allocation is an MMA1 allocation if (1) all agents have identical strictly increasing subadditive valuations, or (2) there are three agents with (different) strictly increasing subadditive valuations.
We say that a subadditive function $v$ is strictly increasing if $v(S\cup \{e\}) > v(S)$ for any $e\notin S$.

\section{Maximin-Aware up to Any Good} \label{sec:MMAx}


First, we define MMAX as follows.

\begin{definition}[MMAX]\label{def:MMAX}
	Fix any $\alpha \in [0,1]$, an allocation $A$ is $\alpha$-MMAX if for any $i$,
	\begin{equation*}
	v_{i}(A_{i})\geq \alpha\cdot \MMS_{i}(A_{-i}\setminus \{e\}, n-1)
	\end{equation*}
	for any $e\in A_{-i}$.
	The allocation is MMAX when $\alpha=1$.
\end{definition}


We first note that MMAX allocations give an even better egalitarian guarantee than MMA1 and EF1 allocations.
Recall Example \ref{example:EF1},
where we show there is an EF1 allocation such that some agent only gets value 1 and
any MMA1 allocation guarantees each agent's value is at least $\lceil\frac{n}{2}\rceil$.
It is not difficult to show that any MMAX allocation guarantees each agent's value is at least $n$.

Obviously, an MMAX allocation is also MMA1.
Together with Lemma \ref{thm:MMA1:mms}, we have
$$\mbox{MMAX $\xRightarrow{BA}$ MMA1 $\xLeftrightarrow{BA}$ MMS}.$$
However, beyond BA valuations, the above implication does not hold.
We can see this by borrowing an example from \cite{kurokawa2018fair}, for which there is no MMS allocation,
but MMAX (and thus MMA1) allocations do exist.

\begin{example}	\label{ex:MMAX:mms}
	Consider the following example with 4 agents with additive valuations on 14 goods, where agent 1 and 2 have valuation shown in matrix $P$, while valuation of agent 3 and 4 are shown in matrix $Q$.
	The goods correspond to the non-zero elements in $P$ and $Q$.
	Here $\epsilon$ is a small number; say $\frac{1}{2^{10}}$ and $\tilde{\epsilon}$ is an even smaller number compared with $\epsilon$; say $\frac{1}{2^{100}}$.
	Partitioning the goods based on the rows gives an MMAX (and MMA1) allocation $A$.
	Note that in $A$, every agent $i$ gets a value of at least $1-\hat{\epsilon}$ and the least favorite good in $A_{-i}$ values at least $\epsilon^{4}$ to her. 
	Then by removing the least favorite good, the proportional value is at most $\frac{1}{3}\cdot (3+\tilde{\epsilon}-\epsilon^{4})$ (the total value of goods is $4$), which is no more than $1-\hat{\epsilon}$.
	Thus $v_{i}(A_{i})\geq \MMS_{i}(A\setminus\{e\},3)$ for any $e\in A_{-i}$ and $i\in N$.
	
	On the other hand, it is easy to check that $\MMS_i(A,4) = 1$ for all agent $i$.
	However, by enumerating all possible allocations, one can check that there does not exist any allocation that guarantees a bundle of value at least $1$ for every agent.
\end{example}
{	\[
	P=\left(
	\begin{array}{llll}
	\frac{7}{8} &  \epsilon^{4} & 0 &   \frac{1}{8} -\epsilon^{4} -\tilde{\epsilon}\\
	\epsilon^{3} &  \frac{3}{4} & -\epsilon^{3}+\epsilon^{2}  & \frac{1}{4}-\epsilon^{2} - \tilde{\epsilon}\\
	0 & -\epsilon^{4}+\epsilon  &  \frac{1}{2} &  \frac{1}{2}+\epsilon^{4}-\epsilon - \tilde{\epsilon}\\
	\frac{1}{8}-\epsilon^{3} & \frac{1}{4}-\epsilon & \frac{1}{2}+\epsilon^{3}-\epsilon^{2} &  \frac{1}{8}+\epsilon^{2} +\epsilon +3\tilde{\epsilon}
	\end{array}
	\right)
	\]
	
	\[
	Q=\left(
	\begin{array}{llll}
	\frac{7}{8} &  \epsilon^{4} & 0 &   \frac{1}{8} -\epsilon^{4}\\
	\epsilon^{3} &  \frac{3}{4} & -\epsilon^{3}+\epsilon^{2}  & \frac{1}{4}-\epsilon^{2}\\
	0 & -\epsilon^{4}+\epsilon  &  \frac{1}{2} &  \frac{1}{2}+\epsilon^{4}-\epsilon \\
	\frac{1}{8}-\epsilon^{3} - \tilde{\epsilon} & \frac{1}{4}-\epsilon - \tilde{\epsilon} & \frac{1}{2}+\epsilon^{3}-\epsilon^{2} -\tilde{\epsilon} &  \frac{1}{8}+\epsilon^{2}+\epsilon + 3\tilde{\epsilon}
	\end{array}
	\right)
	\]
}

Next we show the connection between MMAX and EFX when the valuations are binary additive.

\begin{lemma} \label{thm:MMAX:efx}
	EFX $\xRightarrow{BA}$ MMAX.
\end{lemma}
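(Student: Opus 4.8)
The plan is to exploit the fact that under binary additive (BA) valuations every valuation collapses to counting. Fixing an agent $i$, I would write $x_j = v_i(A_j)$ for each $j\in N$, so that $x_i = v_i(A_i)$ and $\sum_{j} x_j = v_i(M)$, and recall (as already used in Lemmas~\ref{lem:MMAtoMMS} and~\ref{thm:MMA1:mms}) that for BA valuations $\MMS_i(S,k)=\lfloor v_i(S)/k\rfloor$. The MMAX condition $v_i(A_i)\ge \MMS_i(A_{-i}\setminus\{e\},n-1)$ must hold for every $e\in A_{-i}$ (the case $A_{-i}=\emptyset$ being vacuous), but since $v_i(A_{-i}\setminus\{e\})=v_i(A_{-i})-v_i(e)$ and $\MMS_i$ is monotone in the value of its first argument, the right-hand side is maximized by removing a good that $i$ values least. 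Hence it suffices to verify the inequality for a single distinguished good $e^\ast\in A_{-i}$ minimizing $v_i(e^\ast)$.

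The first real step is to restate EFX in counting form. For each $j\neq i$, EFX gives $x_i\ge v_i(A_j\setminus\{e\})$ for every $e\in A_j$. If $A_j$ contains a good worthless to $i$, removing it leaves value $x_j$, forcing $x_j\le x_i$; otherwise every good of $A_j$ is worth $1$ to $i$ and removing one leaves $x_j-1$, forcing $x_j\le x_i+1$. Thus unconditionally $x_j\le x_i+1$, with the sharper bound $x_j\le x_i$ whenever $A_j$ holds an $i$-worthless good. Then I would split on the value of $e^\ast$. If $v_i(e^\ast)=0$, then $e^\ast$ lies in some bundle $A_{j_0}$ containing an $i$-worthless good, so $x_{j_0}\le x_i$ while the other $n-2$ bundles satisfy $x_j\le x_i+1$, giving
\[
v_i(A_{-i}\setminus\{e^\ast\}) = \sum_{j\neq i} x_j \le x_i + (n-2)(x_i+1) = (n-1)x_i + (n-2).
\]
If instead $v_i(e^\ast)=1$, then by the choice of $e^\ast$ every good in $A_{-i}$ is worth $1$ to $i$, each of the $n-1$ bundles satisfies $x_j\le x_i+1$, and discarding $e^\ast$ lowers the count by one:
\[
v_i(A_{-i}\setminus\{e^\ast\}) = \Big(\sum_{j\neq i} x_j\Big) - 1 \le (n-1)(x_i+1) - 1 = (n-1)x_i + (n-2).
\]

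Either way the numerator is at most $(n-1)x_i+(n-2)$, so
\[
\MMS_i(A_{-i}\setminus\{e^\ast\},n-1) = \left\lfloor \frac{v_i(A_{-i}\setminus\{e^\ast\})}{n-1}\right\rfloor \le \left\lfloor \frac{(n-1)x_i+(n-2)}{n-1}\right\rfloor = x_i = v_i(A_i),
\]
using $n-2<n-1$ in the last floor. This is the MMAX inequality for $e^\ast$, hence for every $e\in A_{-i}$, and the lemma follows. The step I expect to be the crux is the translation of EFX into the two-sided bound on $x_j$: specifically, noticing that a single $i$-worthless good in a bundle upgrades the generic bound $x_j\le x_i+1$ to $x_j\le x_i$, and that this upgrade is exactly what compensates for the $-1$ that is lost when no worthless good can be discarded. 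The pleasant consequence is that both cases land on the identical numerator bound $(n-1)x_i+(n-2)$, so the concluding floor computation is uniform; the only care needed is confirming that the least-valued good $e^\ast$ dominates all other choices of $e$, which is immediate from monotonicity of $\MMS_i$.
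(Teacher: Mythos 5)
Your proof is correct and follows essentially the same route as the paper: both arguments extract from EFX the dichotomy that each bundle $A_j$ satisfies $v_i(A_j)\le v_i(A_i)+1$, sharpened to $v_i(A_j)\le v_i(A_i)$ when $A_j$ contains a good worthless to $i$, and then conclude by bounding the total value of $A_{-i}$ (less one good) and dividing by $n-1$. Your reorganizations --- invoking the explicit formula $\MMS_i(S,n-1)=\lfloor v_i(S)/(n-1)\rfloor$, reducing MMAX to the single least-valued good $e^\ast$, and merging both cases into the common numerator bound $(n-1)x_i+(n-2)$ --- are cosmetic streamlinings of the paper's case analysis rather than a different method.
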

\begin{proof}
	Let $A$ be any EFX allocation.
	Consider any agent $i \in N$ and let $x = v_{i}(A_{i})$
	be the value of $i$'s bundle under the allocation $A$.
	Since $A$ is EFX, then for any $j\in N\setminus\{i\}$,
	one of the following statements holds:
	\begin{enumerate}
		\item $v_{i}(A_{j})=x+1$ and $|A_{j}|=x+1$ (i.e., agent $i$ has value 1 for every good in $A_j$);
		\item $v_{i}(A_{j})\leq x$ (i.e., $A_{j}$ may contain more than $x+1$ goods, but agent $i$ has value 1 for exact $x$ goods).
	\end{enumerate}
	This is because if both of them do not hold, then $v_{i}(A_{j})\geq x+1$ and $|A_{j}|>x+1$,
	which means agent $i$ has value $0$ for at least one good in $A_j$.
	Thus agent $i$ still envies agent $j$ after removing the good with value $0$ from $A_j$.
	
	Note that as long as there exists an agent $j\neq i$ such that $v_i(A_j) \leq x$, then $v_i(A_{-i}) \leq (x+1)(n-1)-1$. Hence
	\begin{equation*}
	\MMS_i(A_{-i},n-1)\leq x = v_i(A_i).
	\end{equation*}
	Otherwise we have $v_i(A_{-i}) = |A_{-i}| = (x+1)(n-1)$.
	Then by removing any good $e\in A_{-i}$, we have
	\begin{equation*}
	\MMS_i(A_{-i}\setminus\{e\},n-1) \leq x = v_i(A_i).
	\end{equation*}
	In both case, agent $i$ is satisfied with respect to MMAX. 
\end{proof}

However, when agents have general additive valuations, EFX does not imply MMAX.
More specifically, for the following example with additive valuations, an EFX allocation exists but is not MMAX.

\begin{example}
	Suppose $N=\{1,2,3\}$ and $M=\{1,2,\cdots,7\}$.
	Agent~1 has the additive valuation shown in Table \ref{table:example:MMAX:efx}.
	It is easy to check that $A_{1}=\{1\}$,  $A_{2}=\{2,4\}$, $A_{3}=\{3,5,6,7\}$ is an EFX allocation to agent 1.
	But this is not MMAX to her.
	Consider the following partition of $A_{-1}\setminus \{7\}$:
	$B_{2}=\{2,6\}$,  $B_{3}=\{3,4,5\}$. Then $v_{1}(B_{2})=v_{1}(B_{3})=1.2$,
	which are strictly larger than $v_1(A_1)$.
	\begin{table}[htbp]
		\begin{center}
			\begin{tabular}{|c|cccc|}
				\hline
				\mbox{Goods} & $1,2$ & $3$ & $4$ & $5,6,7$ \\
				\hline
				\mbox{Value} & 1 & 0.6 & 0.4 & 0.2 \\
				\hline
			\end{tabular}
		\end{center}
		\caption{EFX does not imply MMAX in general.}
		\label{table:example:MMAX:efx}
	\end{table}%
\end{example}

\paragraph{The Existence of MMAX Allocations.}

First, note that for two agents, MMAX is equivalent to EFX and thus exists for general subadditive valuations~\cite{plaut2018almost}.
Indeed, following exactly the same analysis as in Theorems~\ref{thm:MMA1:sym:agent} and~\ref{thm:MMA1:3agents},
we obtain the following result.


\begin{theorem}\label{thm:MMAX:sym:agent}
	When there are at most three agents having strictly increasing subadditive valuations, 
	or any number of agents having identical strictly increasing subadditive valuations,
	an MMAX allocation exists.
\end{theorem}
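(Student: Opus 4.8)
The plan is to reduce both halves of the statement to a single leximin argument, upgraded from MMA1 to MMAX using strict monotonicity. The key observation is that for a strictly increasing valuation, reaching the maximin share already forces MMAX: if $v_i(A_i)\ge \MMS_i(M,n)$ then agent~$i$ is MMAX. Indeed, if MMAX failed for some $e\in A_{-i}$ there would be an $(n-1)$-partition $(B_j)_{j\ne i}$ of $A_{-i}\setminus\{e\}$ with $v_i(B_j)>v_i(A_i)$ for all $j$; adjoining $e$ to $A_i$ and using that $v_i(A_i\cup\{e\})>v_i(A_i)$ (strict increase) produces an $n$-partition of $M$ whose minimum strictly exceeds $v_i(A_i)\ge\MMS_i(M,n)$, a contradiction. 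This is exactly the argument of Theorem~\ref{thm:MMA1:sym:agent}, the only new ingredient being that strict monotonicity lets us run it for \emph{every} good $e$ rather than for a single good with positive marginal value---precisely the gap between MMA1 and MMAX. I would record this as a lemma: for strictly increasing valuations, every MMS allocation is MMAX. Note also that whether agent~$i$ is MMAX depends only on $v_i(A_i)$ and on the \emph{set} $A_{-i}$, not on how $A_{-i}$ is split. The identical-valuation case is then immediate: in a leximin $n$-partition every part has value at least the minimum, i.e.\ at least $\MMS_i(M,n)$, so every agent is MMAX.

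For at most three agents with (possibly distinct) strictly increasing subadditive valuations I would run the divide-and-choose procedure of Algorithm~\ref{alg:3agents} and, because MMAX depends only on an agent's own bundle value against the complementary set, track each agent's own value against her own maximin structure. The divider (agent~$3$) always keeps one part of her own leximin $3$-partition, so $v_3(A_3)\ge\MMS_3(M,3)$ and she is MMAX by the lemma above, irrespective of how the choosers split the remaining two parts. For the two choosers I would follow the same branch structure as in Theorem~\ref{thm:MMA1:3agents}: in the branches where a chooser takes her two favorite bundles and re-divides them by her own leximin $2$-partition while the other chooser picks first, I would try to show that the resulting piece (and, for the first chooser, the better of the two pieces) attains that chooser's value of $\MMS_i(M,3)$, again invoking the lemma; and in the ``take the favorite bundle'' branches I would try to show the favorite bundle already meets $\MMS_i(M,3)$.

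The main obstacle is exactly the chooser analysis under genuine subadditivity. Every chooser inequality in Theorem~\ref{thm:MMA1:3agents} is ultimately an instance of $\MMS_i(S,k)\le\tfrac1k v_i(S)$, the fact that the maximin share never exceeds the proportional share; this is where additivity is used to conclude that a favorite bundle, or half of the top two bundles, already beats $\tfrac1n v_i(M)\ge\MMS_i(M,n)$. For subadditive valuations this bound fails outright: $\MMS_i(S,k)$ can exceed $\tfrac1k v_i(S)$ and can even exceed the value of a chooser's favorite bundle, so none of the additive estimates survive. Worse, a chooser's MMAX may be violated by a two-part partition of the complement that \emph{mixes} goods from the divider's bundle with goods from the other chooser's bundle into two parts each worth more than her own bundle; her leximin re-division only controls partitions inside her own two favorite bundles and is blind to such cross-mixtures. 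Closing the proof therefore requires replacing each proportionality step by a direct leximin-local-optimality argument---showing, branch by branch, that no single good can be moved into a chooser's final bundle so as to leave a dominating two-part partition of the remainder---driven by strict monotonicity rather than by a proportional bound. Carrying this out uniformly over all branches of Algorithm~\ref{alg:3agents}, and in particular over the branches where a chooser keeps an unrepartitioned bundle (where the controlling leximin structure must be imported from the divider's partition), is the crux of the argument.
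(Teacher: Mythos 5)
Your key lemma is correct and is exactly the mechanism the paper relies on: for a strictly increasing valuation, $v_i(A_i)\ge \MMS_i(M,n)$ forces MMAX, because a violating $(n-1)$-partition of $A_{-i}\setminus\{e\}$ plus $A_i\cup\{e\}$ would be an $n$-partition of $M$ with minimum strictly above $\MMS_i(M,n)$. This is the argument of Theorem~\ref{thm:MMA1:sym:agent} with strict monotonicity supplying a positive marginal for \emph{every} $e$ rather than some $e$, i.e., precisely the MMA1-to-MMAX upgrade. For the identical-valuations half your proof is complete and matches the paper's (the paper derives the same contradiction directly against leximin optimality; since the leximin minimum equals the MMS value, the two phrasings coincide), and your treatment of the divider (agent~3) in the three-agent case is likewise complete and identical to the paper's.

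The genuine gap is that you never complete the chooser analysis for three agents with distinct strictly increasing subadditive valuations---you end by declaring it ``the crux,'' so as a proof that half of the theorem is missing. Your diagnosis of \emph{why} it is hard is accurate: every chooser step in Theorem~\ref{thm:MMA1:3agents} rests on the additive bound $\MMS_i(S,k)\le\frac{1}{k}v_i(S)$, which fails for subadditive (even strictly increasing) valuations, and your proposed fallback benchmark, showing each chooser reaches $\MMS_i(M,3)$, is not clearly attainable in every branch: in Case~\eqref{step:n3:2} the second chooser's piece $B_1$ only carries the guarantees $v_2(B_1)\ge v_2(B_2\setminus\{e\})$ and $v_2(B_1)\ge v_2(A_3)$, an MMAX-type rather than MMS-type bound, so the reduction to your lemma does not close there. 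It is worth noting that the paper itself offers no more than the sentence that Theorem~\ref{thm:MMAX:sym:agent} follows from ``exactly the same analysis'' as Theorems~\ref{thm:MMA1:sym:agent} and~\ref{thm:MMA1:3agents}, even though the proof of Theorem~\ref{thm:MMA1:3agents} explicitly invokes additivity at each chooser step; so your obstacle is a real one that the paper's own justification glosses over. Still, judged as a proof of the stated theorem, your proposal establishes the identical-valuations case and the divider, and leaves the two choosers (and, trivially, the two-agent case, which reduces to EFX via \cite{plaut2018almost}) unproved.
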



\section{Computing Allocations Efficiently} \label{sec:alg}

In this section, we provide a polynomial-time algorithm for computing an allocation such that when all agents have additive valuations
every agent is either $\frac{1}{2}$-MMA or MMAX.
Moreover, the allocation our algorithm outputs is $\frac{1}{2}$-EFX when all agents have subadditive valuations.

\paragraph{Outline of Algorithm.}
In each round (each while loop of Algorithm~\ref{alg:general}), we assign at most one good to each agent that is not envied by any other agent.
The assignment of goods to agents in each round is given by a maximum weight matching between the unenvied agents $L$ and the unallocated goods $R$, where the weight of every edge is given by the marginal increase in the value, after allocating the good to the corresponding agent.
If all edges between $L$ and $R$ have weight $0$, we assign goods to unenvied agents according to some maximum cardinality matching.
Observe that the set of unenvied agents may change after each round, e.g., after getting one more good, an unenvied agent may become envied by some other agents.
We repeat the procedure until all goods are allocated.
To make sure that there is always an agent not envied by any other agent, we use the approach (procedure $\mathcal{P}$ in the following) introduced in~\cite{lipton2004approximately}, which is now widely used in the resource allocation literature.

\begin{algorithm}[htbp]
	\caption{The Algorithm for Additive (or Subadditive) Valuations}
	\label{alg:general}
	\begin{algorithmic}[1]
		\STATE Initiate $L=N$ and $R=M$.
		\STATE Initiate $A_{i}=\emptyset$ for all $i\in N$.
		\WHILE{$R\neq \emptyset$}
		\STATE \label{step:matching} Compute a maximum weight matching $\mathcal{M}$ between $L$ and $R$, where the weight of edge between $i\in L$ and $j\in R$ is given by $v_i(A_i\cup\{j\}) - v_i(A_i)$. If all edges have weight $0$, then we compute a maximum cardinality matching $\mathcal{M}$ instead.
		\STATE For every edge $(i,j)\in \mathcal{M}$, allocate $j$ to $i$: $A_i = A_i \cup \{j\}$ and exclude $j$ from $R$: $R=R\setminus \{j\}$.\label{step:Alg1:1}
		\STATE \label{setp:deleteCycle:1}As long as there is an envy-cycle with respect to $A=(A_{i})_{i\in N}$, invoke procedure $\cP$ (to be described later).
		\STATE Update $A = (A_i)_{i\in N}$ to be the allocations after $\cP$.
		\STATE Update the set of agents not envied by any other agents: $L = \{ i\in N : \forall j\in N, v_j(A_j) \geq v_j(A_i) \}$. \label{step:alg:general:weight}
		\ENDWHILE
	\end{algorithmic}
\end{algorithm}

\paragraph{Procedure $\cP$.}
Given an allocation $A$, we define its corresponding {\em envy-graph} $\cG$ as follows.
Every agent is represented by a node in $\cG$ and there is a directed edge from node $i$ to node $j$ iff $v_{i}(A_{i}) < v_{i}(A_{j})$, i.e., $i$ is envies $j$.
A directed cycle in $\cG$ is called an {\em envy-cycle}.
Let $C=i_{1}\to i_{2}\to\cdots\to i_{t} \to i_{1}$ be such a cycle.
If we reallocate $A_{i_{k+1}}$ to agent $i_{k}$ for all $k\in [t-1]$, and reallocate $A_{i_1}$ to agent $i_t$, the number of edges of $\cG$ will be strictly decreased.
Thus, by repeatedly using this procedure, we eventually get another allocation whose envy-graph is acyclic.
It is shown in \cite{lipton2004approximately} that $\cP$ can be done in time $O(mn^3)$.	

Next, we prove the following main result of this section.

\begin{theorem} \label{thm:alg2:efx}
	Algorithm \ref{alg:general} computes an allocation in polynomial time that is (1)
	$\frac{1}{2}$-EFX and EF1 when all agents have subadditive valuations; 
	(2) either $\frac{1}{2}$-MMA or exact MMAX for each agent when the valuations are additive.
\end{theorem}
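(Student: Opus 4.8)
The plan is to verify the three parts of the statement in turn—well-definedness together with the running time, the subadditive guarantees (EF1 and $\frac{1}{2}$-EFX), and the additive dichotomy—treating the envy-cycle machinery of \cite{lipton2004approximately} as a black box. First I would establish that the algorithm is well-defined and polynomial. The crucial invariant is that $L\neq\emptyset$ at the start of every while-loop: after Procedure $\cP$ the envy-graph $\cG$ is acyclic, and a finite acyclic digraph has a source, i.e.\ an agent whom nobody envies, so that agent lies in $L$. Hence the matching in Step~\ref{step:matching} has at least one edge, at least one good leaves $R$ each iteration, and there are at most $m$ iterations; since each matching and each call of $\cP$ (which is $O(mn^3)$) is polynomial, so is the whole procedure, and every good is eventually allocated. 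Throughout I would record the monotonicity fact that $v_i(\text{$i$'s current bundle})$ never decreases, since adding a good cannot lower it and an envy-cycle rotation gives each involved agent a strictly preferred bundle. EF1 (needing only monotonicity) then follows by the standard invariant: a good is added only to a currently unenvied $p$, so before the addition $v_k(A_k)\ge v_k(A_p)$ for all $k$, and deleting the new good restores this; rotations merely permute the multiset of bundles while no value drops, so every ``up to one good'' comparison is preserved.

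For $\frac{1}{2}$-EFX (subadditive) I would fix $i\neq j$ and $g\in A_j$, assume $|A_j|\ge 2$ (else trivial), and let $g^\star$ be the \emph{last} good ever added to the bundle that becomes $A_j$. When $g^\star$ was added its holder was unenvied, so the monotonicity fact gives $v_i(A_i)\ge v_i(A_j\setminus\{g^\star\})$, settling $g=g^\star$. For $g\neq g^\star$, subadditivity yields
\[
v_i(A_j\setminus\{g\})\le v_i(A_j\setminus\{g,g^\star\})+v_i(\{g^\star\})\le v_i(A_i)+v_i(\{g^\star\}),
\]
so everything reduces to the single-good domination $v_i(\{g^\star\})\le v_i(A_i)$. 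Here the maximum-weight matching enters. Letting $r'$ be the last iteration at which $i\in L$ while $g^\star$ is still unallocated (it exists, as this holds at iteration $1$), I would show that if $r'$ uses a maximum-cardinality matching, or if in its maximum-weight matching $i$ or $g^\star$ is unmatched, or $i$ is matched to $g^\star$ itself, then a short zero-weight or single-swap argument shows the bundle $i$ holds right after $r'$ already has value at least $v_i(\{g^\star\})$, and monotonicity finishes.

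The step I expect to be the main obstacle is the remaining \emph{contention case}: at iteration $r'$ the maximum-weight matching assigns $g^\star$ to a second unenvied agent $p$ and $i$ to a different good $h$. The $2\times 2$ exchange inequality only gives $v_i(A_i)\ge v_i(\{g^\star\})-w(p,g^\star)$, where $w(p,g^\star)$ is $p$'s marginal gain from $g^\star$, which need not vanish. Closing this gap requires exploiting that $g^\star$ is the last good added to $A_j$—so $p$ already held a nonempty bundle and could not have seized a highly valued $g^\star$ earlier—together with $p$ being unenvied: I would argue that a large $w(p,g^\star)$ would have forced $g^\star$ to be allocated in an earlier iteration (contradicting its being last-added), and then absorb the residual slack against the bound $v_i(A_j\setminus\{g^\star\})\le v_i(A_i)$ to still reach the factor $\frac{1}{2}$. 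This single-good domination lemma is the technical heart of the subadditive claim.

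Finally, for additive valuations I would obtain the per-agent dichotomy from the EF1 and $\frac{1}{2}$-EFX guarantees plus the single-good domination, mirroring the template of Lemma~\ref{thm:MMAX:efx}. Fixing agent $i$, I would split on whether every single good of $A_{-i}$ is worth at most $v_i(A_i)$ to $i$. In that case I would show exact MMAX by arguing that no $(n-1)$-partition of $A_{-i}\setminus\{e\}$ can have all parts of value exceeding $v_i(A_i)$: the natural partition into the bundles $(A_j)_{j\neq i}$, combined with domination of every removed single good and the additive bound $\MMS_i(A_{-i}\setminus\{e\},n-1)\le \frac{1}{n-1}v_i(A_{-i}\setminus\{e\})$, forces $\MMS_i(A_{-i}\setminus\{e\},n-1)\le v_i(A_i)$. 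Otherwise some $e\in A_{-i}$ is so valuable to $i$ that the domination bound already yields $v_i(A_i)\ge\frac{1}{2}\MMS_i(A_{-i},n-1)$, i.e.\ $\frac{1}{2}$-MMA; the necessity of this split (rather than a single clean bound) is exactly what the example separating EFX from MMAX shows. The subadditive contention case above remains the principal difficulty, with the exact-MMAX half of the additive dichotomy a secondary technical point.
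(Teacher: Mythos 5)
Your skeleton matches the paper's proof: well-definedness via acyclicity of the envy-graph, EF1 from the fact that goods are only added to unenvied holders plus monotonicity of bundle values, and $\frac{1}{2}$-EFX from combining $v_i(A_i)\geq v_i(A_j\setminus\{e_j\})$ with the single-good domination $v_i(A_i)\geq v_i(e_j)$ by subadditivity. But your proof of the domination lemma itself has a hole, exactly where you flag it. By anchoring at the \emph{last} iteration $r'$ with $i\in L$ and $g^\star$ unallocated, you admit the contention case where the maximum-weight matching assigns $g^\star$ to a second unenvied agent $p$, and your proposed repair --- that a large $w(p,g^\star)$ ``would have forced $g^\star$ to be allocated earlier'' --- is not an argument: nothing forces a high-marginal good to be matched in earlier rounds (its would-be recipients may have been envied, or other edges may dominate the matching), and the residual slack $w(p,g^\star)$ cannot in general be absorbed into the $\frac{1}{2}$ factor. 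The paper avoids the contention case entirely by anchoring at round $1$: since $g^\star$ is the last good added to a bundle of size at least $2$, and a bundle gains at most one good per round, $g^\star$ is allocated in round $\geq 2$ and is therefore \emph{unmatched} --- not merely unallocated --- in the round-$1$ matching, while $i\in L=N$. So if $i$'s round-$1$ edge had weight below $v_i(g^\star)$, replacing it by $(i,g^\star)$ (or adding $(i,g^\star)$ if $i$ is unmatched) is a one-edge augmentation involving no second agent and strictly increases the weight; monotonicity then gives $v_i(A_i)\geq v_i(g^\star)$. Choosing the first round instead of the last admissible one dissolves your ``technical heart.''

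The additive dichotomy in your last paragraph is also broken, in both directions. Note first that EF1 plus domination already force every good in an $N_1$ or $N_3$ bundle to be worth at most $v_i(A_i)$, so your split (``all goods of $A_{-i}$ worth at most $v_i(A_i)$'' vs.\ not) is just $N_2=\emptyset$ vs.\ $N_2\neq\emptyset$, where $N_2$ is the set of envied singleton-holders --- and neither of your claimed conclusions follows. For your case 1: take $n=4$, $v_i(A_i)=1$, and three other bundles each equal to $\{1,\tfrac12,\tfrac12\}$ in $i$'s additive values; all premises you invoke hold (every good $\leq 1$, EF1 with the good of value $1$ removed, $v_i(A_j)\leq 2v_i(A_i)$), yet removing one half-good leaves goods $1,1,1,\tfrac12,\tfrac12,\tfrac12,\tfrac12,\tfrac12$, which split into parts of values $\tfrac32,\tfrac32,\tfrac52$, so $\MMS_i(A_{-i}\setminus\{e\},3)\geq\tfrac32>v_i(A_i)$ --- exact MMAX is unprovable from your premises, and indeed your averaging bound only ever yields $2v_i(A_i)$ once $N_3\neq\emptyset$. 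For your case 2: if all other agents hold envied singletons (say $n-1$ goods each worth $H\gg v_i(A_i)$), then $\MMS_i(A_{-i},n-1)=H$ and $\frac{1}{2}$-MMA is simply false; the disjunction is saved by exact MMAX, trivially, because deleting any good leaves $n-2$ goods for $n-1$ parts and the maximin share is $0$. This is the paper's actual split: exact MMAX when $N_2=N\setminus\{i\}$, and otherwise $\frac{1}{2}$-MMA by bounding $\MMS_i(A_{-i},n-1)$ by the \emph{average} of $v_i(A_j)$ over $j\in N_1\cup N_3$ (each term at most $2v_i(A_i)$); justifying that average requires the additional trick of capping every good's value at $\tau=\MMS_i(A_{-i},n-1)$ --- which provably leaves $\tau$ unchanged --- so as to discharge the $N_2$ bundles, a step with no analogue in your sketch.
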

\begin{proof}
	We first show that Algorithm \ref{alg:general} is well defined.
	In each round, there must be at least one agent who is not envied by anybody,
	since after Step \ref{setp:deleteCycle:1}, the envy-graph is acyclic.
	Thus there will be at least one agent whose in-degree is $0$, i.e., $L\neq \emptyset$.
	Moreover, since at least one good will be allocated in each round (each while loop), Algorithm~\ref{alg:general} terminates in at most $m$ rounds.
	Given that each round (including procedure $\cP$, the computation of maximum weight matching, and the updates of edge weights and unenvied agents $L$) can be done in polynomial time, Algorithm~\ref{alg:general} runs in polynomial time.
	
	Fix any agent $i$. We classify other agents into three sets:
	\begin{compactitem}
		\item agents not envied by $i$: $N_1 = \{ j\in N\setminus\{i\} : v_i(A_i) \geq v_i(A_j) \}$;
		\item agents envied by $i$ that receive only one good: $N_2 = \{ j\in N\setminus\{i\} : v_i(A_i) < v_i(A_j) \text{ and } |A_j|=1 \}$;
		\item agents envied by $i$ that receive more than one good: $N_3 = \{ j\in N\setminus\{i\} : v_i(A_i) < v_i(A_j) \text{ and } |A_j|\geq 2 \}$.
	\end{compactitem}
	
	By definition we have $N_1\cup N_2\cup N_3 = N\setminus \{i\}$.
	
	We first show that the final allocation $A$ is $\frac{1}{2}$-EFX and EF1 when the valuations are subadditive.
	Note that it suffices to consider $N_3$, as the other agents are either not envied by $i$, or allocated only one good.

	Fix any agent $j\in N_3$.
	Let $e_j$ be the last good allocated to $j$.
	As we only allocate goods to unenvied agents, we have $v_i(A_i) \geq v_i(A_j\setminus \{e_j\})$, which implies that agent $i$ is satisfied with respect to EF1.
	On the other hand, we also have $v_i(A_i) \geq v_i(e_j)$, since otherwise in the first round, matching $e_j$ (which is not allocated yet) with $i$ (which is not envied) gives a matching with strictly larger weight.
	Combing the two inequalities and by subadditivity of valuation $v_i$, we have
	\begin{equation*}
		2\cdot v_i(A_i) \geq v_i(A_j\setminus \{e_j\}) + v_i(e_j) \geq v_i(A_j),
	\end{equation*}
	which implies that agent $i$ is satisfied with respect to $\frac{1}{2}$-EF, and $\frac{1}{2}$-EFX.
	
	Finally, we show that when the valuations are additive, agent $i$ is also satisfied with respect to $\frac{1}{2}$-MMA or MMAX.
	
	The case when $N_2 = N\setminus \{i\}$, i.e., all other agents receive one good and is envied by agent $i$, is trivial: by removing any good $e$ from $A_{-i}$, there is always some agent with no good.
	Thus we have
	\begin{equation*}
	\MMS_i(A_{-i}\setminus \{e\},n-1) = 0 \leq v_i(A_i),
	\end{equation*}
	which implies that agent $i$ is also satisfied with respect to MMAX.
	
	Now suppose that $N_2 \neq N\setminus \{i\}$. In other words, we have $N_1\cup N_3 \neq \emptyset$, and $n-1-|N_2|\geq 1$. We prove that in this case we have $\MMS_i(A_{-i},n-1) \leq 2\cdot v_i(A_i)$, which implies that agent $i$ is satisfied with respect to $\frac{1}{2}$-MMA. 
	
	We first prove that $\MMS_{i}(A_{-i}, n-1)$ is at most the average value of bundles allocated to agents in $N_1\cup N_3$:
	\begin{equation*}
		\MMS_{i}(A_{-i}, n-1) \leq \frac{\sum_{j\in N_1} v_{i}(A_{j}) + \sum_{j\in N_3}v_{i}(A_j)}{n-1-|N_2|}.
	\end{equation*}
	
	Let $\tau = \MMS_{i}(A_{-i}, n-1)$.
	Observe that if we have $v_i(e)\leq \tau$ for every $e\in A_{-i}$, then 
	\begin{align*}
	& \sum_{j\in N_1} v_i(A_j) + \sum_{j\in N_3}v_i(A_j) = v_i(M) - \sum_{j\in N_2} v_i(A_j) \\
	\geq & v_i(M) - |N_2|\cdot \tau \geq (n-1-|N_2|)\cdot \MMS_{i}(A_{-i}, n-1).
	\end{align*}
	Indeed, the assumption is w.l.o.g., since otherwise we can reset every $v_i(e)$ to $\min\{ \tau, v_i(e) \}$, which does not change $\MMS_{i}(A_{-i}, n-1)$.
	The reason is, in any partitioning of $A_{-i}$ into $n-1$ bundles, the bundle with minimum value (whose value is at most $\tau = \MMS_i(A_{-i}, n-1)$) is not touched.
	
	As we have shown before (for subadditive valuations), for every $j\in N_3$, we have $v_i(A_j)\leq 2\cdot v_i(A_i)$.
	Since we have $v_i(A_j) \leq v_i(A_i)$ for all $j\in N_1$, we conclude that
	\begin{equation*}
	\MMS_i(A_{-i},n-1) \leq 2\cdot v_i(A_i),
	\end{equation*}
	which implies that agent $i$ is satisfied with respect to $\frac{1}{2}$-MMA.
\end{proof}


\paragraph{Remark.}
For additive valuations, 
by Theorem~\ref{thm:alg2:efx}, the algorithm returns a $\frac{1}{2}$-MMAX allocation in general.
Actually, by the proof of Theorem~\ref{thm:alg2:efx}, it is not difficult to see that the returned allocation is
\begin{compactitem}
	\item $\frac{1}{2}$-MMA if at least two agents receive more than one good;
	\item $\frac{1}{2}$-MMAX if exactly one agent receives more than one good;
	\item MMAX, otherwise.
\end{compactitem}

Similarly, when the valuations are subadditive, we are able to show that the allocation is 
$\frac{1}{2}$-EF, when at least two agents receive more than one good;
$\frac{1}{2}$-EFX, when exactly one agent receives more than one good;
and EFX, otherwise.

\section{Conclusion} \label{sec:discussion}
%
%
%

In this paper, we introduce a novel fairness notions MMA and its two relaxations MMA1 and MMAX in an unaware environment.
We study their connections with other fairness notations, and propose an efficient algorithm for computing allocations that is (approximately) MMA and MMAX.

We leave the existence of MMA1 and MMAX allocations for the broader class of valuations as a future direction.
Another promising direction would be to extend our work to other preference representations, including ordinal preferences, or to chores instead of goods. 
Finally, it will be interesting to obtain analogues concepts for asymmetric agents, where agents may have different entitlements or shares.

\section*{Acknowledgments}
The authors thank Haris Aziz and David Parkes or reading a draft of this paper and for helpful discussions.
This work is partially supported by NSF CAREER Award No. 1553385.

\bibliographystyle{apalike}
\bibliography{mma}

\end{document}